\documentclass[lettersize,journal]{IEEEtran}

\usepackage[T1]{fontenc}
\usepackage[utf8]{inputenc}
\usepackage{amsmath, amssymb, amsfonts} % For advanced math
\usepackage{graphicx}                   % For including figures
\usepackage{cite}                       % For handling citations like [1]-[3]
\usepackage[linesnumbered,ruled,vlined]{algorithm2e}
\usepackage[noend]{algpseudocode}       % For writing pseudocode
\usepackage{lipsum}                     % For generating dummy text
\usepackage{hyperref}                   % For clickable links in the PDF
\usepackage{amsthm}
\usepackage{graphicx}      % To include images
\usepackage{subcaption}    % For subfigures (modern and recommended)
\usepackage{cite}
\usepackage{caption}
\usepackage{booktabs}
\usepackage{textcomp}
\usepackage{url}
\usepackage{verbatim}
\usepackage{array}

\usepackage{stfloats} % 【关键宏包】用于支持双栏底部float
\usepackage{lipsum} % 用于生成测试文本
\theoremstyle{plain}\newtheorem{theorem}{Theorem}

\newtheorem{assumption}{Assumption}
\newtheorem{proposition}{Proposition}

\theoremstyle{definition}

\hypersetup{
  colorlinks=true,
  linkcolor=blue,
  filecolor=magenta,
  urlcolor=cyan,
  pdftitle={DAFT-s-AFDM AF Paper},
}

\begin{document}

% =========================================================================
% TITLE AND AUTHOR INFORMATION
% =========================================================================
\title{\huge{DAFT-s-AFDM Enabled ISAC Systems: Ambiguity Function Analysis and Waveform Design}}

% Replace with your author information
\author{Shiqi Cui, Tianqi Mao, \textit{Member, IEEE}, Fan Zhang, Zeping Sui, \textit{Member, IEEE}, Christos Masouros, \textit{Fellow, IEEE} \\ and Zhaocheng Wang, \textit{Fellow, IEEE}%

\thanks{This work was supported by National Natural Science Foundation of China under Grant 62401054. \emph{(Corresponding authors: Tianqi Mao.)} The authors would like to thank Prof. Zilong Liu for his insightful discussions and valuable suggestions that have significantly improved the quality of this paper.}
\thanks{S. Cui and T. Mao are with State Key Laboratory of CNS/ATM, Beijing Institute of Technology, Beijing 100081, China. T. Mao is also with Beijing Institute of Technology (Zhuhai), Zhuhai 519088, China (e-mails: cuisqqq@bit.edu.cn, maotq@bit.edu.cn).

    F. Zhang and Z. Wang are with Department of Electronic Engineering, Tsinghua University, Beijing 100084, China. Z. Wang is also with Tsinghua Shenzhen International Graduate School, Shenzhen 518055, China (e-mails: zf22@mails.tsinghua.edu.cn, zcwang@tsinghua.edu.cn).

    Z. Sui is with the School of Computer Science and Electronic Engineering, University of Essex, CO4 3SQ Colchester, U.K. (e-mail: zepingsui@outlook.com).

    Christos Masouros is with the Department of Electronic and Electrical Engineering, University College London, Torrington Place, London, WC1E 7JE, UK (e-mail: c.masouros@ucl.ac.uk).

}}

\markboth{Journal of \LaTeX\ Class Files,~Vol.~14, No.~8, August~2021}%
{Shell \MakeLowercase{\textit{et al.}}: A Sample Article Using IEEEtran.cls for IEEE Journals}

% \IEEEpubid{0000--0000/00\$00.00~\copyright~2021 IEEE}

% Generate the title
\maketitle

% =========================================================================
% ABSTRACT AND KEYWORDS
% =========================================================================
\begin{abstract}
 Discrete affine Fourier transform spread affine frequency division multiplexing (DAFT-s-AFDM) is a promising waveform for integrated sensing and communication (ISAC) due to its low peak-to-average power ratio, robustness to Doppler shifts, and reduced multiuser interference in the uplink transmission. This paper presents a comprehensive ambiguity function (AF) analysis of DAFT-s-AFDM and derives the closed-form expression for the AF magnitude expectation. Several key insights into the impact of DAFT-s-AFDM parameters on ISAC performance are revealed, thus providing concrete guidance for the subsequent waveform design. Building on these insights, a novel probabilistic constellation shaping (PCS) framework is proposed for ISAC waveform enhancement, where the communication throughput and the sensing AF characteristics are jointly optimized by addressing a multi-objective problem. An efficient algorithm based on a closed-form bit error rate expression is developed to obtain the Pareto-optimal solutions. Extensive simulations validate the theoretical results and that the proposed PCS-enhanced DAFT-s-AFDM can significantly outperform the classical counterparts, achieving a superior and highly controllable tradeoff between the dual-functional performances.
\end{abstract}

\begin{IEEEkeywords}
  Integrated sensing and communication (ISAC), affine frequency division multiplexing (AFDM), ambiguity function, probabilistic constellation shaping, waveform design.
\end{IEEEkeywords}

% =========================================================================
% I. INTRODUCTION
% =========================================================================
\section{Introduction}
Integrated sensing and communication (ISAC) is emerging as a foundational technology for future wireless networks, as it intrinsically integrates communication and sensing functionalities within a unified framework~\cite{ISAC_Survey}. Such a paradigm offers higher spectral efficiency, lower hardware cost, and the potential for mutual enhancement between communication and sensing performances~\cite{ISAC_Chirp_Sui}. In this context, it is essential to develop highly efficient and flexible waveforms so as to strike an appropriate tradeoff between the two functionalities.

Most prior ISAC waveform research has focused on adapting legacy waveforms from conventional communication or sensing systems~\cite{ISAC_Survey}. Legacy orthogonal frequency division multiplexing (OFDM) is one of the most representative ISAC waveforms~\cite{cross-domain,  CP-OFDM, Reshaping}. However, the applications of OFDM may be challenging due to its sensitivity to high inter-carrier interference (ICI) incurred by Doppler, carrier frequency offset, and phase noise~\cite{THz_HWI_Survey}. As a remedy, affine frequency division multiplexing (AFDM), has attracted tremendous research attention recently due to its excellent backwards-compatibility to OFDM and its strong resilience in doubly selective channels~\cite{AFDM, HWI_AFDM_Sui, SM_AFDM_Sui}. The basic idea of AFDM is to modulate data symbols upon multiple orthogonal chirp subcarriers which can be efficiently implemented through inverse discrete affine frequency transform (IDAFT). By properly tuning the chirp rate according to the maximum Doppler, AFDM is able to achieve full channel diversity~\cite{non_afdm_Sui}.

AFDM-enabled ISAC has been studied from multiple perspectives~\cite{afdm_isac, AFDM_AF, AFDM_AF_FANLIU, AFDM_ISAC_WCL, AFDM_Fan}. Specifically, the closed-form Cramér-Rao bounds and ambiguity function (AF) of pilot-assisted AFDM with superimposed pilot design were derived in~\cite{afdm_isac}. In~\cite{AFDM_AF}, the authors characterized both the local spike-like and global periodic AF structures of AFDM, and further introduced an unambiguity parallelogram for waveform parameter design. More recently, pulse-shaped random signaling was investigated in~\cite{AFDM_AF_FANLIU}, where controllable sidelobe suppression was demonstrated through proper waveform parameter design. From a system-level perspective, a single DAFT-domain pilot was introduced in~\cite{AFDM_ISAC_WCL} to enable range-velocity estimation and low-complexity analog self-interference cancellation, while the sensing spectral-efficiency/outage tradeoff was characterized in~\cite{AFDM_Fan}, together with an estimator capable of extending the unambiguous Doppler range.

Despite these advances, AFDM still faces several practical limitations, most notably its high peak-to-average power ratio (PAPR). In addition, practical ISAC deployments are inherently multi-user in nature, and imperfect orthogonality due to Doppler and asynchronous transmission may lead to non-negligible multi-user interference (MUI), thereby degrading communication reliability and limiting system scalability. To address these issues, DAFT-spread AFDM (DAFT-s-AFDM) was proposed in~\cite{DAFT-s-AFDM}. Motivated by its favorable communication performance, this work further investigates the potential of DAFT-s-AFDM for ISAC applications.

In parallel with the development of AFDM, probabilistic constellation shaping (PCS) has been extensively studied as a classical waveform design approach for communication systems~\cite{PCS_TWC}. PCS has already been commercialized in optical/fiber communication systems and is being progressively incorporated into fifth-generation New Radio (5G-NR), owing to its ability to provide asymptotic shaping gains, improve block error rate (BLER)-constrained throughput, and enable flexible rate adaptation with only marginal transceiver overhead~\cite{PCS_JLT, PCS_TWC}. Its original design objective is to enhance the achievable rate by shaping symbol probabilities toward a Gaussian-like distribution, such as Maxwell-Boltzmann (MB) shaping, thereby approaching channel capacity.

Very recently, PCS has been introduced into ISAC waveform design to enable flexible communication-sensing trade-offs. Representative studies have shown that PCS can render both pilot and data symbols sensing-aware while preserving communication performance, and can effectively reshape ambiguity-function sidelobes and adjust the rate-sensing operating points~\cite{Reshaping, PCS_ISAC_TWC, Constellation_TVT, PCS_Filter}. For instance, the authors of~\cite{Reshaping} investigated the impact of the normalized fourth-order moment of the constellation on the AF of OFDM waveforms, and proposed a modified Blahut-Arimoto (MBA) algorithm combined with a heuristic optimization strategy to jointly optimize communication and sensing performance over additive white Gaussian noise (AWGN) channels. This research direction was further advanced in~\cite{Constellation_TVT}, where an end-to-end autoencoder was developed for uplink OFDM-ISAC systems to achieve controllable ISAC trade-offs. Building upon these OFDM-oriented studies, the authors of~\cite{PCS_Filter} proposed a unified PCS framework for matched and mismatched filtering, in which the achievable rate is maximized under sensing-metric, power, and distribution constraints, thereby demonstrating the flexibility of PCS for ISAC trade-off design.

Nevertheless, a direct extension of conventional communication-oriented PCS schemes to ISAC systems is generally suboptimal. Most existing ISAC-oriented PCS designs still rely on a strict unit-power assumption, leaving the power-domain design space insufficiently explored. Moreover, MB-type Gaussian shaping tends to increase the PAPR of single-carrier waveforms and to introduce excessive randomness, both of which may impair sensing accuracy~\cite{ISAC_TIT, Sensing_with_commun}. This phenomenon reveals an inherent tradeoff in ISAC waveform design between communication-oriented stochastic shaping and sensing-oriented waveform determinism. Furthermore, DAFT-s-AFDM exhibits channel coupling characteristics and waveform structures that differ fundamentally from those of OFDM, thereby motivating the development of a dedicated PCS framework for DAFT-s-AFDM-based ISAC systems.

Against the above background, the primary contributions of this study are summarized as follows.

\begin{itemize}
\item A rigorous theoretical foundation for sensing with DAFT-s-AFDM waveforms is established, where the analytical AF is derived for both localized and interleaved subcarrier mappings. Specifically, we have the following observations: 1) under full subcarrier occupancy, the main sidelobes form an approximately flat pedestal; 2) the main sidelobes are formed by a weighted superposition of overlapping sinc-like components in the localized FDMA case; and 3) under interleaved FDMA mapping, one can observe a comb-like sidelobe, and periodically distributed pseudo-peaks. It is further shown that velocity-estimation accuracy is closely linked to the normalized fourth-order constellation moment.
\item PCS design is formulated as a novel multi-objective optimization problem that explicitly characterizes the trade-off between communication rate and velocity sensing performance. To achieve Pareto-optimal performance, a low-complexity algorithm is developed based on a closed-form expressions of approximated bit error rate (BER), enabling analytically evaluation of the objective, identification of Pareto front with reduced computational burden and faster runtime.
\item Simulation results validate the theoretical findings and demonstrate under doubly selective channels, PCS-based DAFT-s-AFDM achieves a significantly improved and controllable trade-off between throughput and sensing performance over baselines, thereby quantitatively confirming the benefits of shaping-aware waveform design.

\end{itemize}

The remainder of this paper is organized as follows. Section~\ref{sec:system_model} introduces the system model. Section~\ref{sec:sensing_performance} details the derivation and analysis of the AF. Section~\ref{sec:comm_performance} formulates the communication performance metric. Section~\ref{sec:optimization} presents the proposed PCS optimization algorithms. Numerical results are provided in Section~\ref{sec:results}, and Section~\ref{sec:conclusion} concludes the paper.

\begin{figure*}[t!]
\centering
\includegraphics[width=0.85\linewidth]{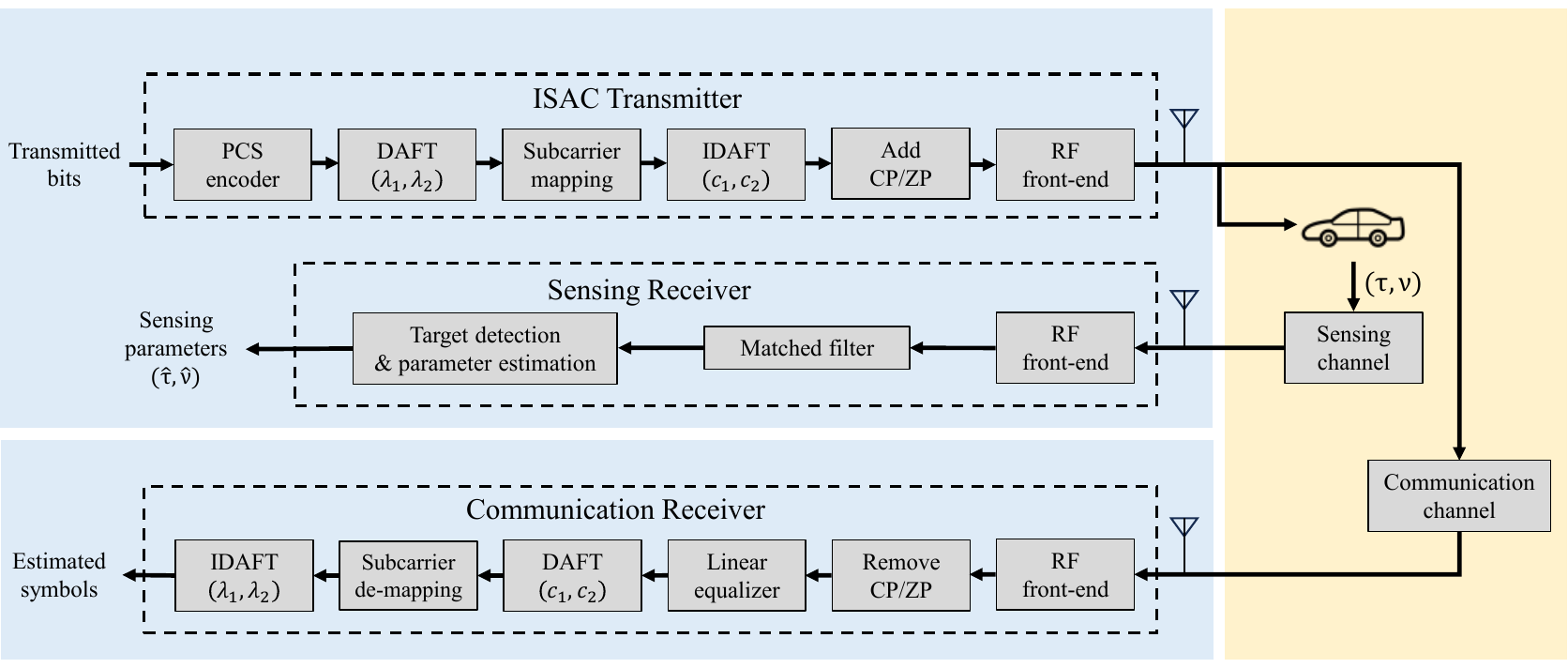}
\caption{Transceiver diagram of the DAFT-s-AFDM-enabled monostatic ISAC system.}
\label{fig:system_model}
% \vspace{-3mm}
\end{figure*}

% =========================================================================
% II. SYSTEM MODEL
% =========================================================================
\textit{Notations:} Bold lowercase and uppercase letters denote vectors and matrices. $(\cdot)^T$, $(\cdot)^H$, and $(\cdot)^*$ denote transpose, Hermitian transpose, and complex conjugate. $\mathbb{C}$ is the complex field; $\mathbf{I}_N$ and $\mathbf{0}$ denote the identity and zero matrices. $\mathrm{diag}(\cdot)$ and $\otimes$ denote diagonalization and Kronecker product. $\mathbb{E}[\cdot]$ is expectation, and $\mathrm{Re}\{\cdot\}$/$\mathrm{Im}\{\cdot\}$ are real/imaginary parts. $\langle a \rangle_N$ is modulo-$N$, and $\lfloor \cdot \rfloor$, $\lceil \cdot \rceil$ are floor/ceiling operators.

\section{System Model}
Fig.~\ref{fig:system_model} illustrates a DAFT-s-AFDM-enabled monostatic ISAC system, in which a unified transceiver performs data transmission and target sensing using a shared waveform, with the sensing transmitter and receiver being co-located. At the transmitter, the input bits are first passed through a PCS encoder to generate constellation symbols following a prescribed probability mass function (PMF), and these shaped symbols are then modulated into a DAFT-s-AFDM waveform. At the receiver, the sensing receiver processes the received echo through matched filtering followed by target detection and parameter estimation to extract the target range and Doppler information, whereas the communication receiver performs linear equalization and symbol demodulation to recover the estimated symbols over the doubly selective channel.
\label{sec:system_model}

\subsection{DAFT-s-AFDM}
\label{subsec:DAFT-s-AFDM}
Without loss of generality, we consider a power-normalized square quadrature amplitude modulation (QAM) alphabet $\mathcal{X}$. Let $X\in\mathcal{X}$ denote the shaped symbol with PMF $P_X(x)=p_x$, constrained by
\begin{equation}
p_x\ge 0,\quad \sum_{x\in\mathcal{X}} p_x=1,\quad \frac{1}{|\mathcal{X}|}\sum_{x\in\mathcal{X}} |x|^2=1.
\end{equation}
% We realize $P_X$ via constant-composition distribution matching (CCDM)~\cite{CCDM}. For each length-$M$ block, CCDM enforces a fixed symbol composition whose empirical distribution is close to the target PMF and maps bits to sequences with that composition. The distribution-matching rate is determined by the number of admissible sequences under the selected composition and approaches the entropy-limited shaping rate as $M$ increases. Thus, for sufficiently large $M$, the CCDM output distribution approaches the target PMF~\cite{CCDM}. This symbol-domain PMF is then embedded into the DAFT-s-AFDM waveform through the following transceiver processing.
The target PMF $P_X$ is implemented by constant-composition distribution matching (CCDM)~\cite{CCDM}, which maps input bits to symbol sequences with a prescribed composition so that the empirical distribution approximates $P_X$. The resulting symbol-domain PMF is then embedded into the DAFT-s-AFDM waveform through the following transceiver processing.

Consequently, the DAFT-s-AFDM transmitter employs a unified matrix representation to map a symbol block $\mathbf{x} \in \mathbb{C}^{M\times1}$ containing $M$ information symbols into an $N$-point time-domain signal $\mathbf{s} \in \mathbb{C}^{N\times1}$. The process begins with DAFT spreading, where an $M$-point DAFT transforms the input vector $\mathbf{x}$ into the DAF domain, yielding
\begin{equation}
\mathbf{x}_{\text{DAF}} = \mathbf{A}_{\lambda_1, M} \mathbf{F}_M \mathbf{A}_{\lambda_2, M} \mathbf{x},
\end{equation}
where $\mathbf{F}_M$ denotes the $M$-point normalized DFT matrix, and the $M \times M$ diagonal chirp matrix parameterized by $\alpha$ is defined as
\begin{equation}
\mathbf{A}_{\alpha, M} = \text{diag}\left(1, e^{-j2\pi \alpha}, \dots, e^{-j2\pi \alpha (M-1)^2}\right).
\end{equation}

The spread symbols are subsequently adapted to the full transmission block size via the mapping matrix $\mathbf{\Gamma}$, an $N \times M$ subcarrier mapping matrix that positions $M$ spread symbols on $N$ available chirp subcarriers at a uniform spacing $S$. Its structure is given by $\mathbf{\Gamma} = [\mathbf{I}_M \otimes \mathbf{u}, \mathbf{0}_{M \times (N - M \cdot S)}]^T$, where $\otimes$ signifies the Kronecker product, $\mathbf{u} = [1, 0, \dots, 0]$ is a length-$S$ base vector, and $\mathbf{\Gamma}$ pads the remaining unused subcarriers to achieve dimension $N$. This mapping places the $M$ spread symbols at $M$ selected positions among the $N$ available subcarriers while zero-padding the unoccupied indices. The signal is then transformed into the time domain by the AFDM modulation module, which implements an $N$-point inverse DAFT (IDAFT) via $\mathbf{A}_{c_1, N}^H \mathbf{F}_N^H \mathbf{A}_{c_2, N}^H$. The resulting time-domain signal $\mathbf{s}$ is expressed as
\begin{align}
\mathbf{s} &=  \mathbf{A}_{c_1, N}^H \mathbf{F}_N^H \mathbf{A}_{c_2, N}^H \mathbf{\Gamma} \mathbf{x}_{\text{DAF}} \nonumber \\
&\triangleq \mathbf{U}_{N, M, c_1, c_2,\lambda_1, \lambda_2}\mathbf{x}.
\end{align}
Finally, a chirp-periodic prefix (CPP) or zero padding (ZP) is prepended to the signal to deal with the multipath propagation \cite{DAFT-s-AFDM, TZP_AFDM, TZP_AFDM_Sui}. It is noteworthy that when $2Nc_1 \in \mathbb{Z}$, CPP becomes equivalent to the conventional cyclic prefix (CP). All subsequent analyses in this paper are based on this assumption.

\subsection{Channel Model}
Both the communication and sensing channels are modeled as doubly selective channels. Let $T_s$ denote the sampling interval and $N$ the number of samples in each frame. The channel impulse response in the delay-Doppler domain is given by~\cite{AFDM, AFDM_AF}
\begin{equation}
\label{eq:channel_model_discrete}
h[l,k] = \sum_{i=1}^{P} h_i \delta[l-l_i]\delta[k-k_i],
\end{equation}
where $l$ and $k$ denote the delay and Doppler indices, respectively, with $\tau=lT_s$ and $\nu=k/(NT_s)$, where $\tau$ and $\nu$ represent the delay and Doppler variables. Here, $P$ denotes the number of propagation paths (or targets in the sensing scenario), $h_i$ is the complex gain of the $i$th path, and $l_i$ and $k_i$ are the corresponding normalized integer delay and Doppler taps, respectively. Moreover, $\delta[\cdot]$ denotes the discrete Kronecker delta function.

Subsequently, the received discrete-time signal, $r[n]$, at either the communication or sensing receiver can be formulated by summing the transmitted signal $s[n]$ over the delay-Doppler taps, which can be expressed as
\begin{equation}
r[n] = \sum_{i=1}^{P} h_i s[(n - l_i)_N] e^{j \frac{2\pi k_i n}{N}} + w[n],
\end{equation}
where the discrete-time AWGN satisfying $w[n] \sim \mathcal{CN}(0, \sigma^2)$, with $\sigma^2$ denoting the noise variance.

% =========================================================================
% III. FORMULATION OF SENSING PERFORMANCE VIA AMBIGUITY FUNCTION
% =========================================================================
\section{Ambiguity Function Analysis for the DAFT-s-AFDM Waveform}
\label{sec:sensing_performance}
This section presents the theoretical framework for characterizing the sensing performance of the DAFT-s-AFDM. We first derive the analytical expression for its AF. The expected value of this function is then analyzed to characterize the fundamental relationship among waveform parameters, constellation statistics, and sensing performance.

\subsection{AF Derivations}
The AF describes the response of a matched filter to the signal with varying time delays $\tau$ and Doppler shifts $\nu$. To accurately characterize this response, it is essential to distinguish between the signal processing models dictated by the presence CPs or ZPs.

In OFDM and AFDM, the inclusion of a CP is standard practice to eliminate inter-symbol interference (ISI). This configuration effectively treats the signal as periodic, implying that the matched filtering at the sensing receiver corresponds to a periodic convolution. Conversely, long-range detection tasks of sensing where targets may lie beyond the CP coverage, or in scenarios requiring a sufficiently small duty cycle, the ISAC signal operates as a zero-padded waveform. In such cases without a CP, the matched filter processing is governed by linear convolution. Accordingly, to present a thorough analysis covering both ISAC signaling cases, we define two distinct forms of the discrete AF~\cite{CP-OFDM}.

For the case with CP, the sensing performance is characterized by the discrete periodic AF (DPAF), denoted as $\chi_P$, which can be expressed by \cite{CP-OFDM}
\begin{equation}
\label{eq:af_periodic_def}
\chi_{P}(\tau, \nu) = \sum_{n=0}^{N-1} s[n] s^*[(n-\tau)_N] e^{-j\frac{2\pi}{N}\nu n},
\end{equation}
where $(\cdot)_N$ denotes the modulo $N$ operation.

For the case with ZP, the sensing performance is characterized by the discrete aperiodic AF (DAAF), denoted as $\chi_A$. Here, the summation corresponds to linear convolution and is restricted to the valid overlap interval
\begin{equation}
\label{eq:af_aperiodic_def}
\chi_{A}(\tau, \nu) =
\begin{cases}
\displaystyle \sum_{n=0}^{N-1-\tau} s[n] s^*[n-\tau] e^{-j\frac{2\pi}{N}\nu n}, & 0 \le \tau < N, \\
\displaystyle \sum_{n=-\tau}^{N-1} s[n] s^*[n-\tau] e^{-j\frac{2\pi}{N}\nu n}, & -N < \tau < 0.
\end{cases}
\end{equation}

Based on the matrix-form signal model shown in Sec. \ref{sec:system_model}, the DAFT-s-AFDM signal without prefix can be further expressed in element-wise form as
\begin{equation}
\label{eq:sn}
s[n] = \sum_{m = 0}^{M-1} x_m g_m[n],
\end{equation}
where $g_m[n]$ is given by
\begin{equation}
\begin{split}
\label{eq:gm_orign}
g_m[n] &= \frac{e^{j2\pi (c_1 n^2 - \lambda_1 m^2)}}{\sqrt{NM}} \sum_{l=0}^{M-1} e^{ j2\pi \left[\Delta \lambda l^2 + \left(S \frac{n}{N} - \frac{m}{M}\right)l \right]},
\end{split}
\end{equation}
where $\Delta \lambda = c_2 S^2 - \lambda_2$, $T$ is the symbol duration and $S$ denotes the subcarrier spacing factor, with $S=1$ for localized-frequency division multiple access (L-FDMA) and $S=N/M$ for interleaved-FDMA (I-FDMA). Let $\chi(\tau, \nu)$ denote the AF representing either $\chi_P$ or $\chi_A$. By substituting the signal model \eqref{eq:sn} into the general AF definition which representing either \eqref{eq:af_periodic_def} or \eqref{eq:af_aperiodic_def}, we obtain a unified expression
\begin{align}
\chi(\tau, \nu) &= \sum_{n \in \mathcal{I}} \left( \sum_m x_m g_m[n] \right) \left( \sum_p x_p^* g_p^*[n'] \right) e^{-j\frac{2\pi}{N}\nu n} \nonumber\\
&= \sum_m \sum_p x_m x_p^* \underbrace{\left( \sum_{n \in \mathcal{I}} g_m[n] g_p^*[n'] e^{-j\frac{2\pi}{N}\nu n} \right)}_{\mathcal{A}_{g_m,g_p}(\tau, \nu)},
\label{eq:chi_general_form}
\end{align}
where the specific definitions of the summation domain $\mathcal{I}$ and the time-shifted index $n'$ are dictated by the chosen AF formulation. For the periodic AF ($\chi_P$), due to the cyclic property, the summation is performed over the full length $N$ regardless of delay $\tau$. The summation domain is
\begin{equation}
\mathcal{I}_{\tau, \text{Per}} = [0, N-1], \quad \forall \tau \in \mathbb{Z}.
\end{equation}
By contrast, for $\tau \neq 0$, the expected sidelobe level remains constant with respect to $\tau$. For the aperiodic AF ($\chi_A$), the summation is restricted to the non-zero overlap duration $L_{\tau} = N - |\tau|$, and the domain is defined as
\begin{equation}
\mathcal{I}_{\tau, \text{Aper}} =
\begin{cases}
\left[\tau, N-1 \right], & \tau \ge 0 \\
[0, N-1-|\tau|], &  \tau < 0
\end{cases}
\end{equation}
This leads to a "roof-top" decay in the expected sidelobe level as $|\tau|$ increases.

Subsequently, taking the expectation of the squared modulus of $\chi(\tau, \nu)$ yields
\begin{equation}
\begin{split}
&E\left\{|\mathcal{\chi}(\tau, \nu)|^2\right\} = E\left[ \left| \sum_{m,p} x_m x_p^* \mathcal{A}_{g_m, g_p}(\tau, \nu) \right|^2 \right]  \\
&~\quad= \sum_{m,p,m',p'} \mathcal{A}_{g_m, g_p} \mathcal{A}_{g_{m'}, g_{p'}}^{*} E[x_m x_p^* x_{m'}^* x_{p'}].
\label{eq:proof_quad_sum_appendix}
\end{split}
\end{equation}

To derive the statistical properties of the AF, the following assumptions concerning the constellation symbols $x$ are introduced.
\begin{assumption}
\label{ass:symbol_stats}
The transmitted symbol sequence $\{x_m\}$ is modeled as a set of zero-mean, mutually independent random variables. Specifically, the first-order moment is assumed to satisfy $\mathbb{E}[x_m] = 0$, and for distinct indices $m \neq p$, the orthogonality condition $\mathbb{E}[x_m x_p^*] = 0$ is maintained.
\end{assumption}

Drawing upon Assumption \ref{ass:symbol_stats}, the non-zero terms in \eqref{eq:proof_quad_sum_appendix} arise only when indices are matched ($m=p=m'=p'$ for 4th order and pairs $\{m=p, m'=p'\}$ and $\{m=p', p=m'\}$ for 2nd order), $\mathbb{E}\left[|\chi(\tau,\nu)|^2\right]$ can be further simplified as
\begin{equation}
\begin{split}
\mathbb{E}\left\{|\chi(\tau, \nu)|^2\right\} &= \ \left| \sum_{m=0}^{M-1} \mathcal{A}_{g_m, g_m}(\tau, \nu) \right|^2  \\
& +  \sum_{m=0}^{M-1} \sum_{p=0}^{M-1} |\mathcal{A}_{g_m, g_p}(\tau, \nu)|^2 \\
& + (\mu_4 - 2) \sum_{m=0}^{M-1} |\mathcal{A}_{g_m, g_m}(\tau, \nu)|^2.
\label{eq:expected_af_general}
\end{split}
\end{equation}
where $\mu_4 = \frac{\mathbb{E}_4}{\sigma_x^4}$ denotes the normalized fourth-order moment of the constellation symbols with the average symbol power $\sigma_x^2$ and $\mathbb{E}_4 = \mathbb{E} \{|x|^4\}$ is the fourth-order moment.

% ----------- 我是定理1 --------------
\begin{figure*}[b] % [b] 表示强制置于底部
\hrulefill

\begin{equation}
\begin{split}
\label{eq:DAFT_AF}
&\mathbb{E} \left\{ |\chi(\tau, \nu)|^2 \right\} =  \underbrace{ \frac{1}{N^2} \mathcal{D}_M\left(\frac{S\tau}{N}\right) \cdot \mathcal{D}_{L_\tau}\left(\frac{\phi}{N}\right)}_{\mathcal{T}_1}
+ \underbrace{\frac{1}{N^2} \sum_{k=-(L_\tau-1)}^{L_\tau-1} (L_\tau - |k|) \cdot \mathcal{D}_M\left(\frac{Sk}{N}\right) \cdot \cos\left( \frac{2\pi k}{N} \phi \right)}_{\mathcal{T}_2} \\
&+ \left(\mu_4 - 2 \right) \cdot \underbrace{\frac{1}{N^2 M} \left( \sum_{k=-(M-1)}^{M-1}\mathcal{D}_{L_\tau}(\xi_k) \cdot \mathcal{D}_{M-|k|}(\eta_k) + 2\sum_{k=1}^{M-1} \left[ \mathcal{S}_{L_\tau}(\xi_k) \mathcal{S}_{L_\tau}(\xi_{k-M}) \right] \cdot \left[ \mathcal{S}_{M-k}(\eta_k) \mathcal{S}_{k}(\eta_{k-M}) \right] \cdot \cos(\Psi) \right)}_{\mathcal{T}_3}.
\end{split}
\end{equation}
\end{figure*}
% ----------- 我是定理1 --------------

Subsequently, by defining $\mathcal{T}_\text{1} \triangleq \left| \sum_{m=0}^{M-1} \mathcal{A}_{g_m, g_m}(\tau, \nu) \right|^2$, $\mathcal{T}_\text{2} \triangleq \sum_{m=0}^{M-1} \sum_{p=0}^{M-1} |\mathcal{A}_{g_m, g_p}(\tau, \nu)|^2$, and $\mathcal{T}_\text{3} \triangleq  \sum_{m=0}^{M-1} |\mathcal{A}_{g_m, g_m}(\tau, \nu)|^2 $ respectively, we obtain the following theorem.
% Consequently, we present the following theorem.

\begin{theorem}[Closed-Form Expressions for AF]
\label{theorem:AF_DAFT}
The expectation of the squared modulus of the DAFT-s-AFDM AF is given by \eqref{eq:DAFT_AF}, where $\phi \triangleq 2Nc_1\tau - \nu $, $\xi_k \triangleq \frac{1}{N}(2Nc_1\tau + Sk - \nu) $, $ \eta_k \triangleq \frac{S\tau}{N} + 2k\Delta\lambda $, $\mathcal{S}_L(x) \triangleq \frac{\sin(\pi L x)}{\sin(\pi x)}$, $ \mathcal{D}_L(x) \triangleq \left| \mathcal{S}_L(x) \right|^2$ and 
\begin{align}
\Psi = 
\begin{cases}
\frac{\pi SM(N-1-\tau)}{N}
+ 2\pi M(M-1)\Delta\lambda,
& \text{for the DPAF},\\
\frac{\pi SM(N-1)}{N}
+ 2\pi M(M-1)\Delta\lambda,
& \text{for the DAAF}.
\end{cases}
\notag
\end{align}
The effective summation length $L_\tau$ is defined as
\begin{equation}
L_\tau =
\begin{cases}
N, & \text{for the DPAF}, \\
N - |\tau|, & \text{for the DAAF}.
\end{cases}
\label{eq:L_tau_def}
\end{equation}

\end{theorem}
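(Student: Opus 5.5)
The proof starts from the decomposition \eqref{eq:expected_af_general}, which already splits $\mathbb{E}\{|\chi|^2\}$ into $\mathcal{T}_1+\mathcal{T}_2+(\mu_4-2)\mathcal{T}_3$ under Assumption~\ref{ass:symbol_stats}. What remains is to evaluate the cross-ambiguity $\mathcal{A}_{g_m,g_p}(\tau,\nu)$ of the basis pulses \eqref{eq:gm_orign} in closed form and then carry out the index sums. Insert \eqref{eq:gm_orign} for $g_m[n]$ and $g_p^*[n-\tau]$. The chirp factor $e^{j2\pi c_1(n^2-(n-\tau)^2)}=e^{j2\pi c_1(2n\tau-\tau^2)}$ becomes linear in $n$ and combines with the Doppler term into $e^{j2\pi n\phi/N}$, with $\phi=2Nc_1\tau-\nu$. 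The hypothesis $2Nc_1\in\mathbb{Z}$ is what lets the periodic wrap $(n-\tau)_N$ be dropped inside the chirp. This yields the triple-sum representation
\begin{equation*}
\mathcal{A}_{g_m,g_p}=\frac{e^{j\theta_{m,p}}}{NM}\sum_{l,l'}e^{j2\pi[\Delta\lambda(l^2-l'^2)-\frac{ml-pl'}{M}+\frac{Sl'\tau}{N}]}\sum_{n\in\mathcal{I}}e^{j2\pi n\frac{\phi+S(l-l')}{N}},
\end{equation*}
in which the inner $n$-sum is a geometric series. Its magnitude is $|\mathcal{S}_{L_\tau}(\xi_{l-l'})|$, and its phase depends on the starting index of $\mathcal{I}$. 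This starting index is the only place where the DPAF and DAAF differ, through $L_\tau$ and the constant phase in $\Psi$.

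Next, each term is handled by exploiting orthogonality in $m,p$.
\begin{itemize}
\item \textbf{$\mathcal{T}_1$.} Set $p=m$ and sum over $m$ first. Since $\sum_m e^{-j2\pi m(l-l')/M}=M\delta_{ll'}$, only $l=l'$ survives. The $l$-sum then collapses to $\sum_l e^{j2\pi Sl\tau/N}$ and the $n$-sum to $\sum_n e^{j2\pi n\phi/N}$. Squaring gives $\frac{1}{N^2}\mathcal{D}_M(S\tau/N)\mathcal{D}_{L_\tau}(\phi/N)$.
\item \textbf{$\mathcal{T}_2$.} Use Parseval over both $m$ and $p$. Summing $|\mathcal{A}_{g_m,g_p}|^2$ over $m,p$ turns the DFT kernels into $M^2\delta$'s. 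This leaves $\frac{1}{N^2}\sum_{n,n'\in\mathcal{I}}|\sum_l e^{j2\pi Sl(n-n')/N}|^2e^{-j2\pi(n-n')\phi/N}$. Grouping by the lag $k=n-n'$ produces the triangular weight $(L_\tau-|k|)$ and $\mathcal{D}_M(Sk/N)$. Pairing $\pm k$ turns the exponential into the cosine.
\item \textbf{$\mathcal{T}_3$.} Here only $p=m$ appears inside a modulus squared, so the $m$-sum acts on $|\cdot|^2$ rather than on $\mathcal{A}$ itself. Expand $|\mathcal{A}_{g_m,g_m}|^2$ as a fourfold sum over $l_1,l_1',l_2,l_2'$. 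Summing over $m$ enforces $l_1-l_1'\equiv l_2-l_2'\pmod M$, with a factor $M$.
\end{itemize}

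Writing $k=l_1-l_1'\in\{-(M-1),\dots,M-1\}$, the constraint for $\mathcal{T}_3$ splits into two cases: either $l_2-l_2'=k$ exactly, or $l_2-l_2'=k\mp M$ (a wrap-around). In the first (diagonal) case, the quadratic phase $\Delta\lambda(l^2-(l-k)^2)=\Delta\lambda(2kl-k^2)$ combines with $Sl\tau/N$. The sum over the $M-|k|$ admissible $l$ is then $\mathcal{S}_{M-|k|}(\eta_k)$ up to phase, and its modulus squared multiplies $\mathcal{D}_{L_\tau}(\xi_k)$; this is the first sum in $\mathcal{T}_3$. In the wrap-around case, pair $k\in\{1,\dots,M-1\}$ with $k-M$. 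The product structure $\mathcal{S}_{L_\tau}(\xi_k)\mathcal{S}_{L_\tau}(\xi_{k-M})\mathcal{S}_{M-k}(\eta_k)\mathcal{S}_k(\eta_{k-M})$ then arises, together with a conjugate pair giving $2\cos\Psi$.

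The main obstacle is this last step: collecting the residual phases exactly so that they reduce to the stated $\Psi$. These phases come from the geometric-series midpoints $e^{j\pi(L-1)x}$, from the $k^2\Delta\lambda$ terms, and from the offset of $\mathcal{I}$ (which starts at $\tau$ for the DAAF). The plan is to write each Dirichlet sum as $e^{j\pi(L-1)x}\mathcal{S}_L(x)$ with an explicit start index and then subtract the phases of the two paired terms. The $k$-dependent parts should cancel, using $\xi_k-\xi_{k-M}=SM/N$ and $\eta_k-\eta_{k-M}=2M\Delta\lambda$, which leaves a constant. The $\tau$-dependence of $\Psi$ in the DPAF case should trace back to the $n$-range $[0,N-1]$ versus the shifted $l'$-phase. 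I would check the resulting $\Psi$ numerically on small $N,M$ before finalizing.
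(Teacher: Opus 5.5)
Your proposal is correct and takes essentially the same route as the paper's derivation. You insert the chirp form of $g_m[n]$ and reduce the $n$-sum to a Dirichlet kernel, whose start index $n_0$ of $\mathcal{I}$ is the only DPAF/DAAF difference. You then use DFT orthogonality over $m$ and $p$, with lag grouping, for $\mathcal{T}_1$ and $\mathcal{T}_2$, and split the mod-$M$ constraint in $\mathcal{T}_3$ into diagonal pairs and $(k,k-M)$ wrap-around pairs.

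The phase bookkeeping you defer does close. The paired $n$-sums contribute $\pi(2n_0+L_\tau-1)SM/N$, which is $\pi SM(N-1)/N$ for the DPAF and $\pi SM(N-1+\tau)/N$ for the DAAF (for either sign of $\tau$). The paired $l$-sums, together with their quadratic $\Delta\lambda k^2$ phases, contribute the $k$-independent term $2\pi M(M-1)\Delta\lambda-\pi SM\tau/N$, and these two contributions add to exactly $\Psi$. One small caveat: dropping the wrap in the DPAF strictly also needs $c_1N^2\in\mathbb{Z}$ (e.g.\ $N$ even), not just $2Nc_1\in\mathbb{Z}$.
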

\begin{proof}
See \textbf{App.}~A in the supplemental document \cite{proof} (also attached with the main manuscript).
\end{proof}

\subsection{Properties Analysis}
\label{sec:properties}
\textbf{Theorem} \ref{theorem:AF_DAFT} facilitates a further analysis of the properties of the DAFT-s-AFDM AF. The impact of the normalized kurtosis $\mu_4$ of the constellation on the AF is examined first. Specifically, the following proposition holds.

\begin{proposition}
\label{prop:1}
Under the condition $\Delta \lambda \in \{n/2 \mid n \in \mathbb{Z}\}$, the normalized kurtosis $\mu_4$ of the constellation primarily influences the AF at points near $\tau = \frac{kN}{S}$, where $k \in \mathbb{Z}$.
\end{proposition}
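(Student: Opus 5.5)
Since $\mu_4$ enters the closed form of Theorem~\ref{theorem:AF_DAFT} only through the coefficient $(\mu_4-2)$ multiplying $\mathcal{T}_3$, we only need to show that $\mathcal{T}_3$ is non-negligible near $\tau = kN/S$ and small elsewhere. The two competing terms $\mathcal{T}_1$ and $\mathcal{T}_2$ are independent of the constellation, so they can be set aside.

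The first step is to use the hypothesis $\Delta\lambda \in \{n/2\}$. It makes $2k\Delta\lambda = kn$ an integer for every integer $k$. Since $\mathcal{S}_L(x)$ and $\mathcal{D}_L(x)$ are $1$-periodic in $x$ up to a sign, namely $\mathcal{S}_L(x+1) = (-1)^{L-1}\mathcal{S}_L(x)$, we obtain $\mathcal{D}_{M-|k|}(\eta_k) = \mathcal{D}_{M-|k|}(S\tau/N)$. Similarly, $|\mathcal{S}_{M-k}(\eta_k)\,\mathcal{S}_{k}(\eta_{k-M})| = |\mathcal{S}_{M-k}(S\tau/N)\,\mathcal{S}_{k}(S\tau/N)|$. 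Every $\tau$-dependent factor in $\mathcal{T}_3$ that is tied to the symbol-domain chirp therefore reduces to a Dirichlet kernel evaluated at the single argument $x_\tau \triangleq S\tau/N$. The key point is that without this condition the chirp term $2k\Delta\lambda$ would spread the peaks of different $k$ over distinct delays, and no clean localization in $\tau$ would hold.

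The second step bounds $\mathcal{T}_3$ using the Dirichlet kernel. We have $|\mathcal{S}_L(x)| \le \min\{L, 1/|\sin\pi x|\}$, with equality $|\mathcal{S}_L(x)| = L$ exactly when $x \in \mathbb{Z}$. Write $d(\tau) = \mathrm{dist}(x_\tau,\mathbb{Z})$. The first sum in $\mathcal{T}_3$ is then bounded by
\begin{equation*}
\frac{1}{N^2M}\sum_k \mathcal{D}_{L_\tau}(\xi_k)\,\min\{(M-|k|)^2,\ \sin^{-2}(\pi d(\tau))\}.
\end{equation*}
The cross sum is bounded the same way, using Cauchy--Schwarz on the $\xi$-factors. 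We also use $\sum_k \mathcal{D}_{L_\tau}(\xi_k) \lesssim L_\tau\,\cdot$(number of $k$ with $\xi_k$ near an integer), which follows from the Parseval-type identity $\sum_{x\in\frac1N\mathbb{Z}_N}\mathcal{D}_L(x) = NL$ (exact for the DPAF, a bound for the DAAF). With these facts, the argument splits into two cases. When $d(\tau) \approx 0$, that is $\tau \approx kN/S$, the $M$-kernels attain their maximum of order $M^2$, and $\mathcal{T}_3$ is of the same order as the pedestal $\mathcal{T}_2$ (roughly $L_\tau M/N$ after normalization). When $d(\tau)$ is bounded away from zero, the $M$-kernels are $O(1/\sin^2(\pi d))$ instead of $O(M^2)$. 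This suppresses $\mathcal{T}_3$ by a factor of order $1/(M^2\sin^2\pi d)$ relative to its peak, so $(\mu_4-2)\mathcal{T}_3$ is negligible against $\mathcal{T}_1+\mathcal{T}_2$. Since $S=1$ gives peaks at $\tau \in N\mathbb{Z}$ and $S=N/M$ gives peaks at $\tau \in M\mathbb{Z}$, both mappings are covered.

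The main obstacle is the cross term involving $\cos\Psi$ and the products $\mathcal{S}_{L_\tau}(\xi_k)\mathcal{S}_{L_\tau}(\xi_{k-M})$. Its sign is not controlled, so it could in principle cancel the first sum near $kN/S$ or enlarge it elsewhere. I would handle it only in absolute value: applying $|ab| \le (a^2+b^2)/2$ gives a bound by the same Dirichlet-kernel envelope as the first sum. That suffices for the "elsewhere is small" direction. For the "near $kN/S$ is significant" direction, I would evaluate directly at $\tau = kN/S$. There $\mathcal{S}_{M-k}(\eta_k)\mathcal{S}_k(\eta_{k-M}) = \pm k(M-k)$, while the first sum contains the $k=0$ term $\mathcal{D}_{L_\tau}(\xi_0)M^2$, which dominates near $\nu \approx 2Nc_1\tau$. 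This gives a nonzero leading contribution, and the word ``primarily'' in the proposition is read in this asymptotic, order-of-magnitude sense.
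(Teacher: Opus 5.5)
Your proposal is correct and follows essentially the paper's route. The hypothesis makes $2k\Delta\lambda$ an integer, so by the $1$-periodicity (up to sign) of $\mathcal{S}_L$ every $\eta_k$-dependent factor of $\mathcal{T}_3$ collapses to a Dirichlet kernel in $S\tau/N$ whose main lobes sit at $\tau\in\frac{N}{S}\mathbb{Z}$, and your envelope bounds simply make that localization quantitative. Two minor corrections: first, the Parseval step should read $\sum_k\mathcal{D}_{L_\tau}(\xi_k)\le 2NL_\tau$, since a near-integer $\xi_k$ contributes up to $L_\tau^2$ rather than $L_\tau$, and this is in fact what your estimate $L_\tau M/N$ uses; second, at the peak points $(\tau,\nu)=(jN/S,\,2Nc_1\tau)$ you need no domination argument against the cross term, because there $\xi_r=Sr/N\notin\mathbb{Z}$ while $L_\tau\xi_r\in\mathbb{Z}$ for $1\le r\le M-1$, so $\mathcal{S}_{L_\tau}(\xi_r)=0$ and the cross term vanishes identically.
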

\begin{proof}
See \textbf{App.}~B in the supplemental document \cite{proof}.
\end{proof}

\begin{proposition}
\label{prop:2}
The expected value of the DAFT-s-AFDM AF at the origin is
\begin{equation}
\label{eq:AF_00}
\mathbb{E} \left\{ |\chi(0, 0)|^2 \right\} = M^2 + (\mu_4 - 1)M.
\end{equation}

\end{proposition}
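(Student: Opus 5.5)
We evaluate \eqref{eq:expected_af_general} at $(\tau,\nu)=(0,0)$ directly, rather than specializing the closed form \eqref{eq:DAFT_AF}; the expected answer is then transparent from unitarity. At the origin $L_\tau=N$ for both the DPAF and the DAAF, and $n'=n$, so
\[
\mathcal{A}_{g_m,g_p}(0,0)=\sum_{n=0}^{N-1} g_m[n]g_p^*[n]=\mathbf{u}_p^H\mathbf{u}_m ,
\]
where $\mathbf{u}_m$ is the $m$th column of $\mathbf{U}_{N,M,c_1,c_2,\lambda_1,\lambda_2}$.

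\textbf{Step 1: orthonormality of the columns.} We show $\mathbf{U}^H\mathbf{U}=\mathbf{I}_M$. The factors $\mathbf{A}_{c_1,N}^H\mathbf{F}_N^H\mathbf{A}_{c_2,N}^H$ form a unitary $N\times N$ matrix. The mapping matrix satisfies $\mathbf{\Gamma}^T\mathbf{\Gamma}=\mathbf{I}_M$, since each column is a distinct standard basis vector. The spreading block $\mathbf{A}_{\lambda_1,M}\mathbf{F}_M\mathbf{A}_{\lambda_2,M}$ is unitary. Hence
\[
\mathcal{A}_{g_m,g_p}(0,0)=\delta[m-p].
\]

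\textbf{Step 2: substitution into \eqref{eq:expected_af_general}.} We use unit average symbol power, $\sigma_x^2=1$, consistent with the normalization of $\mathcal{X}$ under which \eqref{eq:expected_af_general} is written.
\begin{itemize}
\item The first term is $\left|\sum_m 1\right|^2=M^2$.
\item The double sum is $\sum_{m,p}\delta[m-p]=M$.
\item The kurtosis term is $(\mu_4-2)M$.
\end{itemize}
Summing the three terms gives
\[
M^2+M+(\mu_4-2)M=M^2+(\mu_4-1)M,
\]
which is \eqref{eq:AF_00}.

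\textbf{Step 3: consistency check with Theorem~\ref{theorem:AF_DAFT}.} At $\tau=\nu=0$ we have $\mathcal{D}_L(0)=L^2$, so $\mathcal{T}_1=M^2N^2/N^2=M^2$. The terms $\mathcal{T}_2$ and $\mathcal{T}_3$ involve Fejér-type sums over $k$. Reconciling these with the values $M$ and $M$ from Step 2 requires care about $\mathcal{D}_M(Sk/N)$ for the admissible $S$, and about the $\Delta\lambda$-dependent phases $\eta_k=2k\Delta\lambda$. This is the only delicate part, and it is avoided by the direct route above.

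The main obstacle is therefore purely bookkeeping: confirming that $\mathcal{A}_{g_m,g_p}(0,0)$ reduces exactly to the Gram matrix of $\mathbf{U}$. This means verifying that the normalization $1/\sqrt{NM}$ in \eqref{eq:gm_orign} makes each $g_m$ unit-norm. It can be checked via Parseval on the inner sum over $l$: $g_m$ is a chirp-modulated $N$-point IDFT of $M$ unit-modulus coefficients, each scaled by $1/\sqrt{M}$.
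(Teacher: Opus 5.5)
Your proof is correct, but it takes a different route from the one the paper sets up. The paper presents Proposition~\ref{prop:2} among the consequences of Theorem~\ref{theorem:AF_DAFT}, i.e., as the specialization of the closed form \eqref{eq:DAFT_AF} to $\tau=\nu=0$. You instead return to \eqref{eq:expected_af_general} and use only that the modulation matrix has orthonormal columns.

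Your route buys generality and transparency. Nothing depends on $c_1$, $c_2$, $\Delta\lambda$, $S$ (beyond $MS\le N$), or the CP/ZP distinction, so the identity holds for any modulation matrix with orthonormal columns. It can be shortened further: $\chi(0,0)=\|\mathbf{s}\|^2=\|\mathbf{U}\mathbf{x}\|^2=\|\mathbf{x}\|^2$ deterministically. Hence $\mathbb{E}|\chi(0,0)|^2=\mathbb{E}(\sum_m|x_m|^2)^2=M\mu_4+M(M-1)$ at unit power. This needs only independence, not zero mean or vanishing pseudo-variance.

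The closed-form route buys a consistency check on Theorem~\ref{theorem:AF_DAFT}, and your Step~3 is less delicate than you suggest. You already have $\mathcal{T}_1=M^2$.
\begin{itemize}
\item For $\mathcal{T}_2$, expand $\mathcal{D}_M(Sk/N)=\sum_{l,l'}e^{j2\pi S(l-l')k/N}$ and use $\sum_{k=-(N-1)}^{N-1}(N-|k|)e^{j2\pi ak/N}=|\sum_{n=0}^{N-1}e^{j2\pi an/N}|^2=N^2\delta[(a)_N]$. Since $|S(l-l')|<SM\le N$, only $l=l'$ survives, giving $\mathcal{T}_2=M$.
\item For $\mathcal{T}_3$, both $\mathcal{D}_N(Sk/N)=0$ and $\mathcal{S}_N(Sk/N)=0$ for $0<|k|<M$, because $Sk\in\mathbb{Z}$ and $0<|Sk|<N$. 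So the second sum vanishes and only $k=0$ survives in the first. There $\eta_0=0$, the $\Delta\lambda$-dependent factors never enter, and $\mathcal{T}_3=N^2M^2/(N^2M)=M$.
\end{itemize}

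One bookkeeping remark on your identification of $g_m$ with the columns of $\mathbf{U}$. Equation \eqref{eq:gm_orign} interchanges the roles of $\lambda_1$ and $\lambda_2$ relative to the matrix model. In $\mathbf{A}_{\lambda_1,M}\mathbf{F}_M\mathbf{A}_{\lambda_2,M}\mathbf{x}$, the chirp multiplying $x_m$ is $e^{-j2\pi\lambda_2 m^2}$ and the chirp on the spread index $l$ is $e^{-j2\pi\lambda_1 l^2}$. So $g_m$ equals the $m$th column of $\mathbf{U}$ only up to this relabeling. This is harmless, since either ordering has orthonormal columns. Alternatively, extend your Parseval check to cross terms: summing over $n$ first forces $l=l'$, which gives $\sum_n g_m[n]g_p^*[n]=\delta[m-p]$ directly from \eqref{eq:gm_orign}.
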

\begin{proof}
See \textbf{App.}~C in the supplemental document \cite{proof}.
\end{proof}
Notably, the result in \textbf{Proposition} 2 shares a similar form with Corollary 1 in ~\cite{Iceberg} and can be viewed as a generalization of the latter. 

The aforementioned propositions establish a theoretical foundation for optimizing sensing performance based on the constellation kurtosis. The sidelobe amplitudes at $\tau = \frac{kN}{S}$, where $k \in \mathbb{Z}$, exhibit a positive correlation with $\mu_4$, whereas the impact of $\mu_4$ on the main peak is minimal. Consequently, reducing $\mu_4$ diminishes the Doppler-shift sidelobes along these delay slices, thereby enhancing velocity sensing performance. This is because $\mu_4$ weights the fourth-order terms in the expected squared AF, which are the dominant contributors to Doppler leakage away from the mainlobe on a fixed-delay cut. Hence, shaping the constellation to lower $\mu_4$ yields a cleaner Doppler response and improved velocity estimation.

Subsequently, we investigate the structural characteristics of the DAFT-s-AFDM AF by considering three distinct subcarrier allocation schemes:
\begin{itemize}
\item[($\mathrm{a}$)] $M=N$ and $S=1$ (full subcarrier occupancy, equivalent to a single-carrier waveform);
\item[($\mathrm{b}$)] $M<N$ and $S=1$ (L-FDMA);
\item[($\mathrm{c}$)] $M<N$ and $S=N/M$ (I-FDMA).
\end{itemize}

The term $\mathcal{T}_2$ of \eqref{eq:DAFT_AF} denotes the aggregate mutual AF components across all subcarriers. Based on the principle of energy conservation, as $N$ becomes large, the energy associated with $\mathcal{T}_2$ constitutes a dominant fraction of $\mathbb{E} \left\{ |\chi(\tau, \nu)|^2 \right\}$, thereby forming the primary structural sidelobes. Consequently, the structural properties of the AF are elucidated by analyzing $\mathcal{T}_2$, leading to the following proposition.

\begin{proposition}
\label{prop:3}
Consider the DPAF. For waveform configuration ($\mathrm{a}$), the main sidelobes form a flat constant pedestal of $N$. For waveform configuration ($\mathrm{b}$), the main sidelobes consist of a weighted sum of overlapping sinc functions centered around the $\phi$ axis. For waveform configuration ($\mathrm{c}$), the main sidelobes exhibit a comb-like structure with linear delay-Doppler coupling.
\end{proposition}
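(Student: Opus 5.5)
We specialize the term $\mathcal{T}_2$ of Theorem~\ref{theorem:AF_DAFT} to the DPAF, so that $L_\tau = N$. This gives
\begin{equation*}
\mathcal{T}_2 = \frac{1}{N^2}\sum_{k=-(N-1)}^{N-1} (N-|k|)\, \mathcal{D}_M\!\left(\frac{Sk}{N}\right) \cos\!\left(\frac{2\pi k}{N}\phi\right), \qquad \phi = 2Nc_1\tau-\nu .
\end{equation*}
The observation driving all three cases is that $\tau$ and $\nu$ enter $\mathcal{T}_2$ only through $\phi$. Every sidelobe pattern is therefore a function on the delay--Doppler plane that is constant along the lines $2Nc_1\tau-\nu=\text{const}$; this is the linear delay--Doppler coupling in the statement. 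The remaining task is to evaluate the weight sequence $w_k \triangleq (N-|k|)\,\mathcal{D}_M(Sk/N)$ in each configuration and interpret $\mathcal{T}_2$ as a Fourier (cosine) series in $\phi$. For integer $\tau,\nu$ and $2Nc_1\in\mathbb{Z}$, $\phi$ is an integer, so we may pair the $k$ and $k-N$ terms. This folds the sum into a length-$N$ cyclic sum, since $\cos(2\pi k\phi/N)$ is $N$-periodic in $k$ and $\mathcal{D}_M(S(k-N)/N)=\mathcal{D}_M(Sk/N)$ whenever $S\in\mathbb{Z}$. The folded weights are $\tilde w_k = N\,\mathcal{D}_M(Sk/N)$ for $k=0,\dots,N-1$, which cleans up every case.

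\emph{Case (a)}, $M=N$, $S=1$: $\mathcal{D}_N(k/N)$ equals $N^2$ at $k\equiv 0 \pmod N$ and vanishes otherwise. Only the $k=0$ term survives, giving $\mathcal{T}_2 = N\cdot N^2/N^2 = N$ independently of $(\tau,\nu)$, which is the flat pedestal of height $N$.

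\emph{Case (c)}, $S=N/M$: $\mathcal{D}_M(k/M)$ is nonzero only for $k\equiv0 \pmod M$, where it equals $M^2$. Then $\mathcal{T}_2=\frac{M^2}{N}\sum_{r=0}^{S-1}\cos(2\pi r\phi/S)$, which equals $\frac{M^2 S}{N}=M$ when $S\mid\phi$ and vanishes otherwise. This is a comb of ridges along $2Nc_1\tau-\nu\in S\mathbb{Z}$, i.e.\ the comb-like sidelobes with linear coupling; the periodic pseudo-peaks appear where the ridges meet.

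\emph{Case (b)}, $S=1$, $M<N$: $\mathcal{D}_M(k/N)$ is a genuine Fejér-type window, with main lobe of width about $N/M$ in $k$. We write $\mathcal{T}_2=\frac1N\sum_{k=0}^{N-1}\mathcal{D}_M(k/N)\cos(2\pi k\phi/N)$. Expanding $\mathcal{D}_M(k/N)=\sum_{q=-(M-1)}^{M-1}(M-|q|)e^{j2\pi qk/N}$ and summing over $k$ yields $\mathcal{T}_2=\sum_q (M-|q|)\,\delta_N[q\pm\phi]$. This triangular profile is supported on $|\phi|<M$ modulo $N$ and is centred on the line $\phi=0$. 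For non-integer $\nu$, each delta becomes a Dirichlet kernel $\mathcal{D}_N$ (a sinc) centred at $\phi=\mp q$, so the pattern is a superposition of overlapping sinc lobes weighted by $M-|q|$ around the $\phi$ axis. This matches the statement.

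The main obstacle is the folding step and case (b) when $\phi$ is non-integer. Without folding, the cosine sum against the triangle weight $(N-|k|)$ must be handled directly. I would first carry out the computation for integer $\phi$ to obtain the clean delta and comb forms, then argue by continuity that for fractional $\nu$ each delta is replaced by a Dirichlet/sinc kernel. Some care is also needed with the bookkeeping of $\cos$ versus complex exponentials and the $\pm q$ symmetry in the final weighting.
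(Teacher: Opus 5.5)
Your on-grid computations are correct. With $L_\tau=N$, the pair $(\tau,\nu)$ enters $\mathcal{T}_2$ only through $\phi$, and for $\phi\in\mathbb{Z}$ the folding gives three results. In case (a) you get the constant $N$. In case (c) you get $M$ on $\phi\in S\mathbb{Z}$ and $0$ elsewhere. In case (b) you get $\sum_q(M-|q|)\,\delta_N[q-\phi]$; each of your two delta families carries a factor $\tfrac12$, which the symmetry $q\to-q$ absorbs.

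The gap is the step you flagged yourself: extending case (b) to non-integer $\phi$ ``by continuity''. You cannot skip this step, because ``overlapping sinc functions'' is a claim about $\mathcal{T}_2$ between grid points. Continuity cannot supply it. Folding replaced $\cos(2\pi(k-N)\phi/N)$ by $\cos(2\pi k\phi/N)$, which holds only on the integers, so the integer samples determine only the aliased coefficients. Concretely, your folded series $\frac1N\sum_{k=0}^{N-1}\mathcal{D}_M(k/N)\cos(2\pi k\phi/N)$ is itself continuous and agrees with $\mathcal{T}_2$ at every integer $\phi$, yet differs from it in between. For $M=1$, $N=3$, $\phi=\tfrac12$ it gives $\tfrac13$, whereas $\mathcal{T}_2=\tfrac19\mathcal{D}_3(\tfrac16)=\tfrac49$.

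The fix is to not fold. In the unfolded sum, substitute $\mathcal{D}_M(Sk/N)=\sum_{q=-(M-1)}^{M-1}(M-|q|)e^{j2\pi qSk/N}$, exchange the sums, and use $\sum_{k=-(N-1)}^{N-1}(N-|k|)e^{j2\pi kx}=\mathcal{D}_N(x)$. This gives, for every real $\phi$,
\begin{equation*}
\mathcal{T}_2=\frac{1}{N^2}\sum_{q=-(M-1)}^{M-1}(M-|q|)\,\mathcal{D}_N\!\left(\frac{\phi-qS}{N}\right),
\end{equation*}
a triangle-weighted superposition of $\mathcal{D}_N$ (sinc-squared) lobes centred at $\phi=qS$. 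For $S=1$ this is exactly the case (b) claim, and on the grid it collapses to your delta expression.

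The other two cases are simpler. In case (c), use that $\mathcal{D}_M(k/M)=0$ unless $M\mid k$; the unfolded sum then becomes $\mathcal{T}_2=\frac{M}{S^2}\mathcal{D}_S(\phi/S)$, a Fej\'er comb of period $S$ in $\phi$ that is valid off the grid too. Case (a) needs no folding at all, since $\mathcal{D}_N(k/N)=0$ for $0<|k|<N$.

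Separately, drop the remark that pseudo-peaks appear where the ridges meet. $\mathcal{T}_2$ depends on $\phi$ alone, so its ridges are parallel lines that never meet. The periodic pseudo-peaks come from $\mathcal{T}_1$ (Proposition~\ref{prop:4}).
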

\begin{proof}
See \textbf{App.}~D in the supplemental document \cite{proof}.
\end{proof}

In contrast to $\mathcal{T}_2$, $\mathcal{T}_1$ primarily manifests as peaks in the AF, leading to the following proposition.

\begin{proposition}
\label{prop:4}
Within the defined domain, all coordinates $(\tau^*, \nu^*)$ of the local maxima of $\mathcal{T}_1$ are given by
\begin{equation}
\left( \tau^*, \nu^* \right) = \left( \frac{mN}{S}, \quad \frac{2N^2 c_1 m}{S} - kN \right), \quad m, k \in \mathbb{Z}.
\end{equation}
\end{proposition}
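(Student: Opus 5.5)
We work directly with the closed form of $\mathcal{T}_1$ in Theorem~\ref{theorem:AF_DAFT},
\[
\mathcal{T}_1=\frac{1}{N^2}\,\mathcal{D}_M\!\left(\frac{S\tau}{N}\right)\mathcal{D}_{L_\tau}\!\left(\frac{\phi}{N}\right),\qquad \phi=2Nc_1\tau-\nu .
\]
This is a product of two nonnegative Fejér-type kernels. The first depends only on $\tau$. The second depends on $(\tau,\nu)$ only through $\phi$, and, for the DPAF, its length $L_\tau=N$ is constant. So we can treat the change of variables $(\tau,\nu)\mapsto(\tau,\phi)$ as a shear, which is invertible, and maximize the two factors separately.

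\emph{Step 1 (elementary kernel facts).} For any positive integer $L$, $\mathcal{D}_L(x)=\sin^2(\pi Lx)/\sin^2(\pi x)$ is $1$-periodic and satisfies $0\le\mathcal{D}_L(x)\le L^2$. Equality holds if and only if $x\in\mathbb{Z}$, where the kernel is understood by its limit. This follows from $|\mathcal{S}_L(x)|=|\sum_{l=0}^{L-1}e^{j2\pi lx}|\le L$, with equality exactly when all the phasors align, i.e.\ $x\in\mathbb{Z}$. These are precisely the dominant (global, main-lobe) maxima; the smaller local bumps of the Dirichlet kernel are sidelobes of a single factor. Consistent with how the proposition is used, we interpret ``local maxima of $\mathcal{T}_1$'' as its peaks, i.e.\ the points where both factors attain their peak value.

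\emph{Step 2 (delay coordinate).} The first factor peaks at $M^2$ exactly when $S\tau/N\in\mathbb{Z}$, that is, $\tau^*=mN/S$ with $m\in\mathbb{Z}$. For L-FDMA ($S=1$) this leaves only $\tau=0$ modulo $N$ inside the domain. For I-FDMA ($S=N/M$) it gives the periodic grid $\tau=mM$.

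\emph{Step 3 (Doppler coordinate).} At fixed $\tau^*$, the second factor peaks at $L_\tau^2$ exactly when $\phi/N\in\mathbb{Z}$, i.e.\ $2Nc_1\tau^*-\nu=kN$. Substituting $\tau^*$ gives
\[
\nu^*=\frac{2N^2c_1m}{S}-kN .
\]
The sign convention on $k$ is immaterial, since $k$ ranges over $\mathbb{Z}$. Because the map $(\tau,\nu)\mapsto(\tau,\phi)$ is a bijection, a point maximizes the product if and only if each factor is maximized. This yields exactly the stated set, with peak value $M^2L_\tau^2/N^2$.

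\emph{Main obstacle.} The delicate part is the DAAF, where $L_\tau=N-|\tau|$ makes the second factor depend on $\tau$ as well, so the two factors no longer decouple cleanly. The plan there is to fix $\tau$ first. For each $\tau$, the $\nu$-maximization still gives $\phi\in N\mathbb{Z}$, independently of $L_\tau$. The profile in $\tau$ is then $\mathcal{D}_M(S\tau/N)(N-|\tau|)^2$, and we show its peaks remain at $S\tau/N\in\mathbb{Z}$. Near such a point the sharp main lobe of $\mathcal{D}_M$ dominates the slowly varying roof-top factor $(N-|\tau|)^2$. Moreover, since $\tau$ is integer-valued and $\mathcal{D}_M$ vanishes at the neighbouring integers $\tau=mN/S\pm1$ whenever $S$ divides $N$ appropriately, the discrete maximizer is exactly $\tau^*=mN/S$.

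We also need to confirm that no point outside this grid qualifies as a peak, which again follows from the equality case in Step 1. Finally, we restrict $m$ and $k$ so that $(\tau^*,\nu^*)$ lies within the defined domain.
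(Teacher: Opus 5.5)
Your proposal is correct and follows essentially the same route as the paper. Because $\mathcal{T}_1=\frac{1}{N^2}\mathcal{D}_M(S\tau/N)\,\mathcal{D}_{L_\tau}(\phi/N)$ is a product of nonnegative Fej\'er-type kernels, its main peaks occur where both arguments are integers, i.e.\ $\tau^*=mN/S$ and $\phi=2Nc_1\tau^*-\nu^*=kN$. Here ``local maxima'' is read, as you rightly note, as these main peaks rather than the Dirichlet sidelobes.

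One minor correction to your DAAF remark: $\mathcal{D}_M$ vanishes at $\tau=mN/S\pm1$ only when $MS=N$ (full occupancy or I-FDMA), not for L-FDMA with $M<N$. In that L-FDMA case, however, only $m=0$ lies in the domain, and it is the global maximum of both factors, so your conclusion is unaffected.
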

\begin{proof}
See \textbf{App.}~E in the supplemental document \cite{proof}.
\end{proof}

The aforementioned propositions offer valuable insights into the overall structure of the DAFT-s-AFDM AF. However, for parameter configurations ($\mathrm{b}$) and ($\mathrm{c}$), the resulting AF structure exhibits stronger delay-Doppler coupling and is more involved than that of configuration ($\mathrm{a}$). Therefore, simulation results are presented in Sec.~\ref{sec:results} to validate the theoretical analysis and offer additional insights into the AF behavior under these more complex settings.

% =========================================================================
% IV. FORMULATION OF COMMUNICATION PERFORMANCE VIA EFFECTIVE THROUGHPUT
% =========================================================================
\section{Effective Throughput Analysis for the DAFT-s-AFDM Waveform}
\label{sec:comm_performance}

To comprehensively evaluate the dual-functional performance of the waveform, this section introduces the effective throughput as the performance metric for the communication system.

\subsection{Effective Throughput}
\label{subsec:throughput_formulation}

The effective throughput metric, denoted as $\mathcal{I}(\mathbf{P}, \text{SNR})$ is formulated as the product of the source entropy and the probability of correct block transmission, yielding
\begin{equation}
\mathcal{I}(\mathbf{P}, \text{SNR}) = H(\mathbf{P}) \cdot (1 - P_b(\mathbf{P}, \text{SNR})),
\label{eq:throughput_main}
\end{equation}
where $\mathbf{P} = \{P(x)\}_{x \in \mathcal{X}}$ is the constellation's PMF, $\text{SNR}$ is the signal-to-noise ratio (SNR) of the AWGN channel, $H(\mathbf{P}) = -\sum_{i} p_i \log_2 p_i$ denotes the constellation entropy in bits/symbol, and $P_b(\mathbf{P}, \text{SNR})$ represents the BER.

\subsection{Bit Error Rate}
\label{subsec:ber_linear}
Recalling the system model described in Section~\ref{sec:system_model}, the received signal vector $\mathbf{y}$ as
\begin{align}
\mathbf{y} &= \mathbf{H}\mathbf{U}\mathbf{x} + \mathbf{n} \nonumber \\
&\triangleq \mathbf{H}_{\text{eff}}\mathbf{x} + \mathbf{n},
\end{align}
where $\mathbf{H}_{\text{eff}} = \mathbf{H}\mathbf{U} \in \mathbb{C}^{N \times M}$ denotes the effective channel matrix, and $\mathbf{n} \sim \mathcal{CN}(\mathbf{0}, N_0 \mathbf{I}_N)$ represents AWGN. The receiver employs a linear equalizer $\mathbf{W} \in \mathbb{C}^{M \times N}$ to process $\mathbf{y}$, producing the estimated signal vector
\begin{equation}
\mathbf{r} = \mathbf{W}\mathbf{y} = \mathbf{W}\mathbf{H}_{\text{eff}}\mathbf{x} + \mathbf{W}\mathbf{n}.
\end{equation}
The effective modulation matrix is defined as $\mathbf{G} = \mathbf{W}\mathbf{H}_{\text{eff}}$, and the post-equalization noise vector is denoted by $\tilde{\mathbf{n}} = \mathbf{W}\mathbf{n}$. For the $k$-th symbol, $r_k$ can be written as
\begin{equation}
\label{eq:rk_orign}
r_k = G_{kk} x_k + \sum_{l \neq k} G_{kl} x_l + \tilde{n}_k,
\end{equation}
where $G_{kk}$ is the diagonal element of $\mathbf{G}$, which represents the equivalent channel gain, and the off-diagnoal elements $G_{kl}$, which captures the interference from the $l$-th transmitted symbol to the $k$-th received symbol. By invoking the central limit theorem, the residual interference $\sum_{l \neq k} G_{kl} x_l$ is modeled as Gaussian noise $\tilde{n}_k$, hence \eqref{eq:rk_orign} can be rewritten as
\begin{equation}
\label{eq:rk}
r_k = \alpha_k x_k + z_k,
\end{equation}
where $\alpha_k = G_{kk}$ and $z_k = \sum_{l \neq k} G_{kl} x_l + \tilde{n}_k$ denotes the aggregate equivalent noise. Since both $\mathbf{x}$ and $\tilde{\mathbf{n}}$ are zero-mean, $z_k$ is also zero-mean. Its variance $\sigma_k^2 \triangleq \mathbb{E}[|z_k|^2]$ consists of two components:
\begin{align}
\sigma_{\text{noise}, k}^2 &= \left[ \mathbf{W} (N_0 \mathbf{I}_N) \mathbf{W}^H \right]_{kk} = N_0 \|\mathbf{w}_k\|^2, \\
\sigma_{\text{int}, k}^2 &= \sum_{l \neq k} |G_{kl}|^2 E_s,
\end{align}
where $\mathbf{w}_k$ denotes the $k$-th row of $\mathbf{W}$ and $E_s$ is the symbol energy of $x$. Therefore, the overall noise variance is given by
\begin{equation}
\sigma_k^2 = N_0 \|\mathbf{w}_k\|^2 + E_s \sum_{l \neq k} |G_{kl}|^2.
\end{equation}

As the original noise $\mathbf{n}$ is circularly symmetric and the linear transformation $\mathbf{W}$ preserves Gaussianity and circular symmetry, $z_k$ follows a circularly symmetric complex Gaussian (CSCG) distribution with zero mean, i.e., $z_k \sim \mathcal{CN}(0, \sigma_k^2)$. At the receiver, the maximum a posteriori (MAP) detector can be expressed as
\begin{equation}
\label{eq:MAP_detector}
\hat{x}_k = \arg \max_{x_m \in \mathcal{X}} P(x_m | r_k).
\end{equation}
By invoking the Bayes' theorem, \eqref{eq:MAP_detector} can be simplified as
\begin{align}
\label{eq:x_k_likelihood}
\hat{x}_k &= \arg \max_{x_m \in \mathcal{X}} \left\{ p(r_k | x_m) P(x_m) \right\} \nonumber \\
&= \arg \max_{x_m \in \mathcal{X}} \left\{ \frac{1}{\pi \sigma_k^2} \exp\left( -\frac{|r_k - \alpha_k x_m|^2}{\sigma_k^2} \right) P(x_m) \right\}.
\end{align}
It can be readily shown that the optimization problem of \eqref{eq:x_k_likelihood} can be reformulated as
\begin{equation}
\hat{x}_k = \arg \min_{x_m \in \mathcal{X}} \left( \frac{|r_k - \alpha_k x_m|^2}{\sigma_k^2} - \ln P(x_m) \right).
\end{equation}
Consider the case where the transmitted symbol $x_i$ is erroneously detected as $x_j$. An error occurs when the metric for $x_j$ is smaller than that for $x_i$, i.e.,
\begin{equation}
\label{eq:rk_uk}
\frac{|r_k - \alpha_k x_j|^2}{\sigma_k^2} - \ln P(x_j) < \frac{|r_k - \alpha_k x_i|^2}{\sigma_k^2} - \ln P(x_i).
\end{equation}
Based on \eqref{eq:rk} and define the effective distance vector $d_{ij} \triangleq \alpha_k (x_i - x_j)$, \eqref{eq:rk_uk} is equivalent to
\begin{equation}
|\alpha_k x_i + z_k - \alpha_k x_j|^2 - |z_k|^2 < \sigma_k^2 \left( \ln P(x_j) - \ln P(x_i) \right).
\end{equation}
Applying the geometric identity $|d_{ij} + z_k|^2 = |d_{ij}|^2 + |z_k|^2 + 2 \operatorname{Re}\{ d_{ij}^* z_k \}$, the expression simplifies to
\begin{equation}
2 \operatorname{Re}\{ d_{ij}^* z_k \} < -\sigma_k^2 \ln \frac{P(x_i)}{P(x_j)} - |d_{ij}|^2 .
\end{equation}
Define the decision variable $Y \triangleq 2 \operatorname{Re}\{ d_{ij}^* z_k \}$. As $z_k$ is CSCG noise, the projected component $Y$ follows a zero-mean real Gaussian distribution with variance $\operatorname{Var}(Y) = 4 |d_{ij}|^2 \cdot \frac{\sigma_k^2}{2} = 2 |d_{ij}|^2 \sigma_k^2$. The pairwise error probability (PEP) is thus the probability that $Y$ falls below the threshold $T = -|d_{ij}|^2 - \sigma_k^2 \ln \frac{P(x_i)}{P(x_j)}$, given by
\begin{equation}
P(Y < T) = Q\left( \frac{|d_{ij}|^2 + \sigma_k^2 \ln \frac{P(x_i)}{P(x_j)}}{|d_{ij}| \sqrt{2 \sigma_k^2}} \right).
\end{equation}
Applying the union bound, the BER is approximated as
\begin{equation}
\label{eq:BER_1}
\begin{split}
P_b  \approx  &  \frac{1}{\log_2{|\mathcal{X}|}} \sum_{i=1}^{|\mathcal{X}|} P(x_i) \sum_{j \neq i}^{|\mathcal{X}|}  d_H(x_i, x_j) \\ & \quad \quad \times Q\left( \frac{|\alpha_k (x_i - x_j)|^2 + \sigma_k^2 \ln \left( \frac{P(x_i)}{P(x_j)} \right)}{\sqrt{2 \sigma_k^2} \, |\alpha_k (x_i - x_j)|} \right),
\end{split}
\end{equation}
where $d_H(x_i, x_j) = \sum_{k=1}^{m} b_{i,k} \oplus b_{j,k}$ denotes the Hamming distance, with $b_{i,k}$ denoting the $k$-th bit of the binary label for symbol $x_i$, $m = \log_2 |\mathcal{X}|$, and $\oplus$ representing the modulo-2 addition.

Let the $|\mathcal{X}|$ constellation points be partitioned into $K$ energy rings, denoted by the set $\mathcal{E} = \{E_1, E_2, \dots, E_K\}$. The subset of symbols located on the $r$-th energy ring is defined as
\begin{equation}
\mathcal{X}_r = \{x_m \in \mathcal{X} \mid |x_m|^2 = E_r\}.
\end{equation}
Since the generalized Maxwell-Boltzmann distribution depends only on $|x_i|^2$ and $|x_i|^4$, the probability of any symbol $x_m \in \mathcal{X}_r$ is given by $P(E_r)$. At high SNR, detection errors are dominated by nearest neighbors. Let $d_{\min}$ denote the minimum Euclidean distance of the base constellation $\mathcal{X}$. The topological Hamming-weight matrix entry $C(E_r, E_s)$, which aggregates the total Hamming distance from all symbols on ring $E_r$ to their nearest neighbors on ring $E_s$, is defined as
\begin{equation}
C(E_r, E_s) = \sum_{x_i \in \mathcal{X}_r} \sum_{x_j \in \mathcal{N}(x_i) \cap \mathcal{X}_s} d_H(x_i, x_j),
\end{equation}
where $\mathcal{N}(x_i) = \{x \in \mathcal{X} \mid |x_i - x| = d_{\min}\}$ denotes the set of nearest neighbors for symbol $x_i$. By substituting $P(x_i) = P(E_r)$, $P(x_j) = P(E_s)$, and $|x_i - x_j| = d_{\min}$ into \eqref{eq:BER_1}, the symbol-wise double summation is consolidated into a ring-wise summation. The BER is thus approximated as
\begin{equation}
\begin{split}
P_b \approx & \frac{1}{\log_2 |\mathcal{X}|} \sum_{r=1}^K P(E_r) \sum_{s=1}^K C(E_r, E_s)
\\
& \quad \quad \times Q\left( \frac{|\alpha_k|^2 d_{\min}^2 + \sigma_k^2 \ln \left( \frac{P(E_r)}{P(E_s)} \right)}{|\alpha_k| d_{\min} \sqrt{2\sigma_k^2}} \right).
\end{split}
\end{equation}
Under this formulation, the computational complexity of evaluating the theoretical BER is reduced from $\mathcal{O}(|\mathcal{X}|^2)$ to $\mathcal{O}(K^2)$. 

% =========================================================================
% V. PROPOSED PCS WAVEFORM OPTIMIZATION
% =========================================================================
\section{Proposed PCS-based ISAC Waveform Design}
\label{sec:optimization}

In this section, a dual-functional waveform design is proposed to jointly improve the  AF characteristics and communication throughput.

\subsection{Problem Formulation}
The fundamental trade-off between communication throughput and sensing accuracy is addressed by formulating the following multi-objective optimization problem (MOOP)
\begin{subequations}
\label{eq:general_opt}
\begin{align}
\min_{\mathbf{P}}& \quad \left[ -\mathcal{I}(\mathbf{P}, \text{SNR}), \mu_4 \right] \\
\text{s.t.} & \quad \sum_{x \in \mathcal{X}} p(x) = 1, \quad p(x) \ge 0, \label{eq:C1}
\\
& \quad \mathbb{E}[|x|^2] = \sum_{x \in \mathcal{X}} p(x) |x|^2 = \sigma_x^2,
\label{eq:C2}
\end{align}
\end{subequations}
where $\mathbf{P}$ is the PMF to be optimized, $\mu_4$ is the fourth-order moment representing sensing performance, and $\mathcal{I}(\mathbf{P}, \text{SNR})$ is the effective throughput metric of \eqref{eq:throughput_main}. Given the typically high dimensionality of the constellation's probability distribution, a direct search for the optimal
$\mathbf{P}$ over the entire probability distribution space is computationally intractable. Therefore, to explore the trade-off between communication and sensing, we begin by formulating a multi-objective optimization problem. The objective is to simultaneously maximize the constellation entropy $H(\mathbf{P})$ (for communication) and the fourth-order moment $\mathbb{E}[|x|^4]$ (for sensing), subject to basic probability and power constraints, yielding
\begin{subequations}
\label{eq:pmf_multi_obj_origin}
\begin{align}
\min_{\mathbf{P}} & \quad \left[ -H(\mathbf{P}), \; \mu_4 \right] \\
\text{s.t.} & \quad \eqref{eq:C1}, \eqref{eq:C2}.
\end{align}
\end{subequations}

This problem can be solved using the weighted sum and the Lagrange multipliers algorithms. By solving this problem using non-negative weights $\omega_1, \omega_2$ (with $\omega_1 + \omega_2 = 1$), we find that any Pareto-optimal PMF must conform to a generalized Maxwell-Boltzmann (MB) distribution
\begin{equation}
\label{eq:pmf_form}
p(x; \lambda_1, \lambda_2) = \frac{1}{Z(\lambda_1, \lambda_2)} \exp\left( \lambda_1 |x|^2 + \lambda_2 |x|^4 \right),
\end{equation}
where $Z(\lambda_1, \lambda_2) = \sum_{x \in \mathcal{X}} \exp(\lambda_1 |x|^2 + \lambda_2 |x|^4)$. The parameter $\lambda_2 = \frac{\omega_2 \ln 2}{\omega_1}$ is a trade-off coefficient that governs the sensing performance, while $\lambda_1$ is implicitly fixed by the average power constraint, which is given by
\begin{equation}
\label{eq:constr_power}
\sum_{x \in \mathcal{X}} p(x; \lambda_1, \lambda_2) |x|^2 = \sigma_x^2.
\end{equation}
Therefore, for given weights $\omega_1, \omega_2$, $\lambda_2$ is directly determined.

\eqref{eq:pmf_form} provides a principled and computationally efficient foundation for our waveform design. Therefore, the original problem can be further reformulated as
\begin{subequations}
\label{eq:pmf_multi_obj}
\begin{align}
\min_{\mathbf{P}} & \quad \left[  -\mathcal{I}(\mathbf{P}, \text{SNR}), \; \mu_4 \right] \\
\text{s.t.} & \quad  p(x; \lambda_1, \lambda_2) = \frac{\exp \left( \lambda_1 |x|^2 + \lambda_2 |x|^4 \right)}{Z(\lambda_1, \lambda_2)} \label{eq:Probelm_PMF}
% \quad  \mathbf{P}(\lambda_1, \lambda_2) \text{ is given by \eqref{eq:pmf_form}}
\end{align}
\end{subequations}

The PMF $\mathbf{P}$ is generated using the parametric form from~\eqref{eq:pmf_form}, which allows optimization over the parameter $\lambda_1$ and $\lambda_2$.
To solve \eqref{eq:pmf_multi_obj} and trace the Pareto-optimal front, we employ the widely used weighted sum method. To transform the problem into a single-objective optimization problem. The weighted linear combination of objectives can be expressed as
\begin{equation}
\label{eq:weighted_sum_obj}
J(\mathbf{P}, \omega) = -\omega \cdot \mathcal{I}(\mathbf{P}, \text{SNR}) + (1-\omega) \cdot  \mu_4,
\end{equation}
where $\omega \in [0, 1]$ is the weighting factor that balances the two objectives. Finally, \eqref{eq:pmf_multi_obj} is reformulated as
\begin{subequations}
\label{eq:pmf_multi_obj_final}
\begin{align}
\min_{\mathbf{P}} & \quad J(\mathbf{P}, \omega)\\
\text{s.t.} & \quad \mathbf{P}(\lambda_1, \lambda_2) \text{ is given by \eqref{eq:Probelm_PMF}}
\end{align}
\end{subequations}

\subsection{ISAC Waveform Optimization}
\label{subsec:linear_opt}

The optimization objective is to find the PMF $\mathbf{P}$ that maximizes the scalarized utility function $J(\mathbf{P}, \omega)$, representing the weighted sum of throughput and sensing performance. Based on the derivation in \eqref{eq:pmf_form}, the optimal PMF which satisfying the maximum entropy principle is strictly parameterized by two Lagrange multipliers, $\lambda_1$ and $\lambda_2$. These two multipliers are which correspond to the second-order moment and the fourth-order moment, respectively. Consequently, the problem of optimizing the high-dimensional vector $\mathbf{P}$ is transformed into finding the optimal pair of parameters $(\lambda_1, \lambda_2)$.

To solve this problem, we propose a robust two-stage search procedure, detailed in \textbf{Algorithm}~\ref{alg:linear_opt}, to find the optimal operating point. The hybrid strategy first performs a coarse grid search over the trade-off weight $\omega$ and the average power $\sigma_x^2$ to globally identify the most promising operating region. The search grids $\mathcal{W}_{\text{grid}} \subset [0,1]$ and $\mathcal{P}_{\text{grid}} \subset \mathbb{R}_+$ are predefined candidate sets for the trade-off weight $\omega$ and the average power $\sigma_x^2$, respectively. In this stage, for each grid point $(\omega, \sigma_x^2)$, the second Lagrange multiplier as $\lambda_2 = \frac{(1-\omega) \ln 2}{\omega}$, then we calculate the first Lagrange multiplier $\lambda_1$ to satisfy the power constraint in \eqref{eq:constr_power}. Construct the corresponding PMF as $\mathbf{P} = p(x; \lambda_1, \lambda_2)$ and evaluate the composite objective $J(\mathbf{P}, \omega)$ using \eqref{eq:weighted_sum_obj}. The best pair $(\omega^*, \sigma_x^{2*})$ from this search can be invoked to initialize a high-precision local refinement with the Nelder-Mead simplex method \cite{fminsearch}, yielding a derivative-free direct search algorithm for unconstrained nonlinear optimization that iteratively updates a simplex using reflection, expansion, contraction, and shrink operations. In the refinement stage, we optimize over $(\omega, \sigma_x^2)$ to obtain $(\omega_{\text{opt}}, \sigma_{x,\text{opt}}^2)$, compute $\lambda_{2,\text{opt}} = \frac{(1-\omega_{\text{opt}})\ln 2}{\omega_{\text{opt}}}$, determine $\lambda_{1,\text{opt}}$ by enforcing the power constraint with $(\omega_{\text{opt}}, \sigma_{x,\text{opt}}^2)$, and finally construct the PMF as $\mathbf{P}^* = p(x; \lambda_{1,\text{opt}}, \lambda_{2,\text{opt}})$. This approach efficiently finds the globally optimal PMF that minimizes the objective value under the given channel configuration.

\begin{algorithm}[t!]
\SetAlgoLined
\caption{Hybrid Grid-Refinement Optimizer}
\label{alg:linear_opt}
% Input and Output
\KwIn{Constellation set $\mathcal{X}$, SNR, search grids $\mathcal{W}_{\text{grid}}$, $\mathcal{P}_{\text{grid}}$}
\KwOut{The single best PMF $\mathbf{P}^*$}
% Initialization
\BlankLine
$J_{\text{best}} \leftarrow -\infty$\;
$(\omega^*, \sigma_x^{2*}) \leftarrow (\emptyset, \emptyset)$\;
% \tcp{Stage 1: Global Grid Search}
\ForEach{$\omega \in \mathcal{W}_{\text{grid}}$}{
\ForEach{$\sigma_x^2 \in \mathcal{P}_{\text{grid}}$}{
Compute multipliers $(\lambda_1, \lambda_2)$\;
Construct PMF $\mathbf{P}$ and compute $J(\mathbf{P}, \omega)$\;
\If{$J > J_{\text{best}}$}{
  $J_{\text{best}} \leftarrow J$; \ $(\omega^*, \sigma_x^{2*}) \leftarrow (\omega, \sigma_x^2)$\;
}
}
}

\If{$(\omega^*, \sigma_x^{2*}) \neq (\text{null}, \text{null})$}{
Let initial point be $\mathbf{v}_0 = (\omega^*, \sigma_x^{2*})$\;
$(\omega_{\text{opt}}, \sigma_{x, \text{opt}}^2) \leftarrow \text{Nelder-Mead}(J, \mathbf{v}_0)$\;
Compute $\lambda_{2, \text{opt}}$ and $\lambda_{1, \text{opt}}$\;
$\mathbf{P}^* \leftarrow p(x; \lambda_{1, \text{opt}}, \lambda_{2, \text{opt}})$\;
}
\Return{$\mathbf{P}^*$}\;
\end{algorithm}

\subsection{Complexity Analysis}
\label{subsec:complexity}
This subsection analyzes the computational complexity of \textbf{Algorithm}~\ref{alg:linear_opt}. Denote the grid sizes as $|\mathcal{W}_{\text{grid}}|$ and $|\mathcal{P}_{\text{grid}}|$, let $K$ be the number of independent energy rings. For each grid point $(\omega,\sigma_x^2)$, the algorithm computes $(\lambda_1,\lambda_2)$, constructs the PMF, and evaluates $J(\mathbf{P},\omega)$. The dominant cost is the evaluation of the BER approximation in Sec.~\ref{subsec:ber_linear}. Using the proposed energy-ring grouping framework, each evaluation has complexity $\mathcal{O}(K^2)$. Therefore, the overall complexity of the grid-search stage is on the order of $\mathcal{O}\big(|\mathcal{W}_{\text{grid}}|\,|\mathcal{P}_{\text{grid}}|\,K^2\big)$, where the cost of solving $\lambda_1$ to meet \eqref{eq:constr_power} is absorbed into the constant factor and can be efficiently computed by a one-dimensional root-finding method with a small number of iterations. The best grid point then initializes a Nelder-Mead refinement. Let $N_{\text{NM}}$ denote the number of objective evaluations required by this routine, and it should be noted that each evaluation has the same dominant cost. Hence, the refinement stage has complexity $\mathcal{O}\big(N_{\text{NM}}\,K^2\big)$, and the total computational complexity is on the order of
\begin{equation}
\mathcal{O}\Big(\big(|\mathcal{W}_{\text{grid}}|\,|\mathcal{P}_{\text{grid}}| + N_{\text{NM}}\big)\,K^2\Big).
\end{equation}
which is substantially lower than classical PCS baselines requiring repeated Monte Carlo (MC) mutual information estimation, enabling online, real-time transceiver optimization.

\begin{figure}[t!]
\centering

\begin{subfigure}{\linewidth}
\centering
\includegraphics[width=0.48\linewidth]{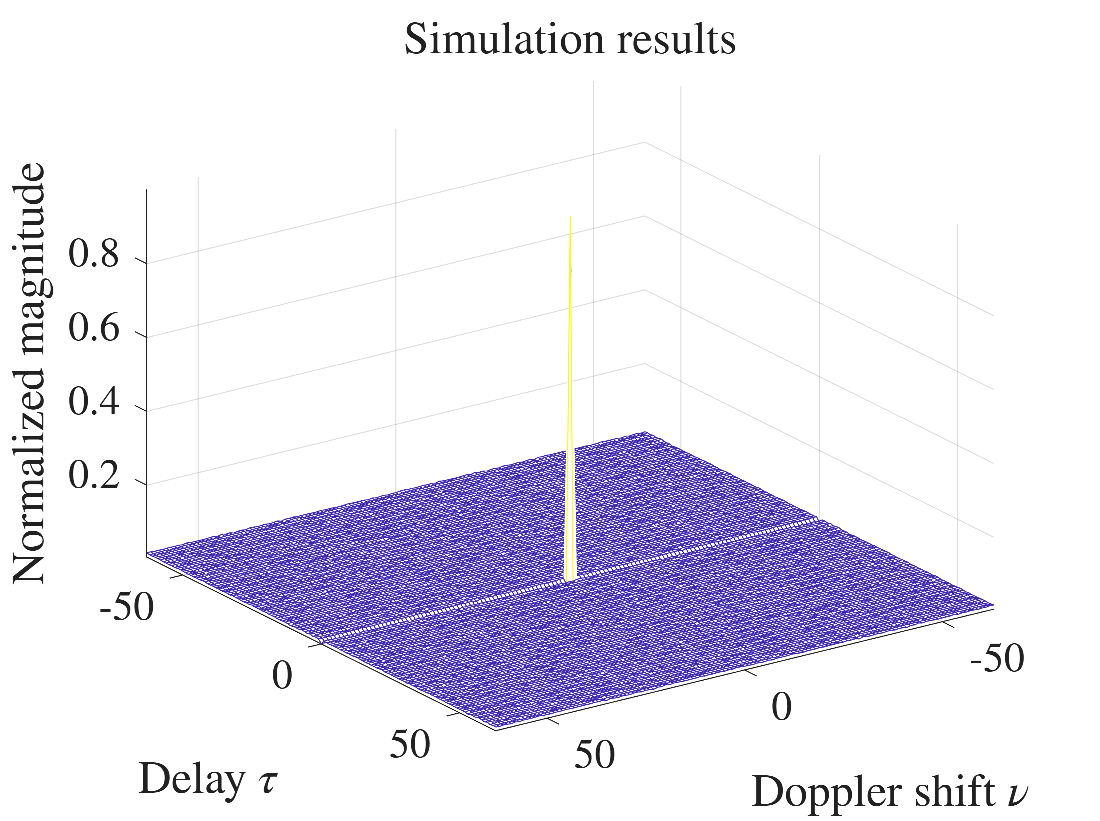}
\hfill
\includegraphics[width=0.48\linewidth]{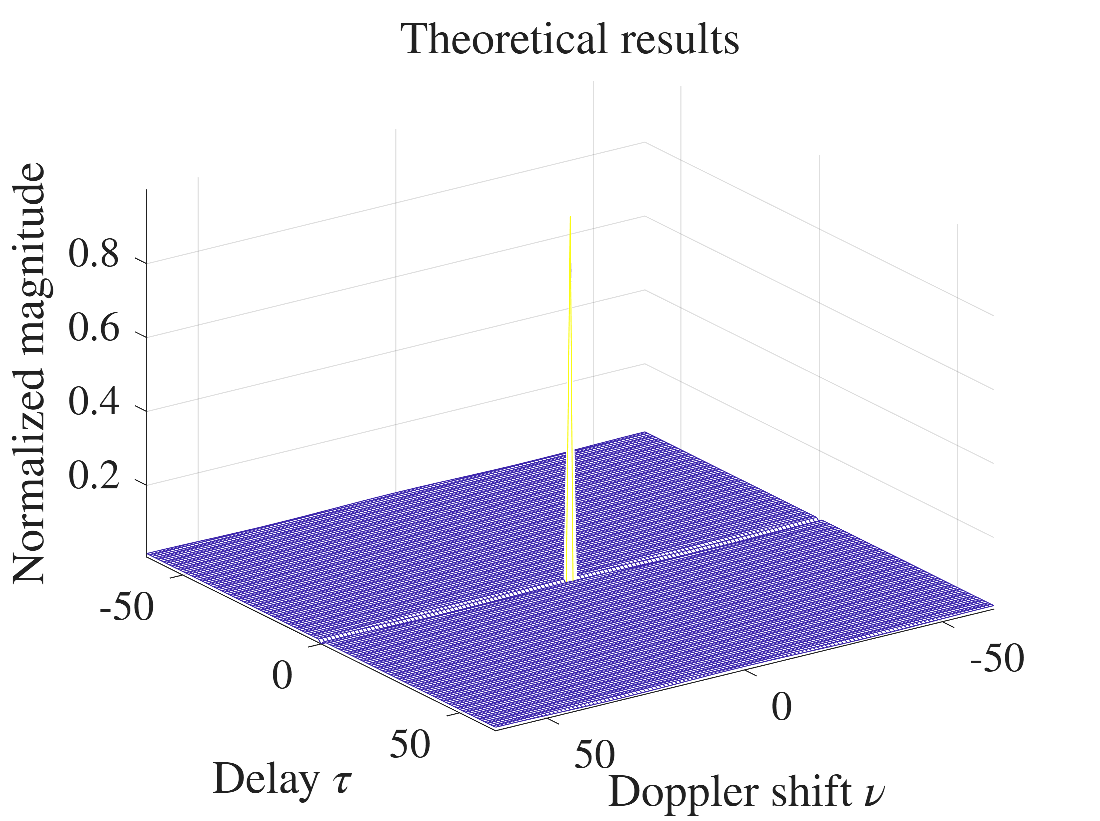}
\caption{Subcarrier configuration $\mathrm{(a)}$.}
\label{fig:AF_a}
\end{subfigure}

\vspace{1em}

\begin{subfigure}{\linewidth}
\centering
\includegraphics[width=0.48\linewidth]{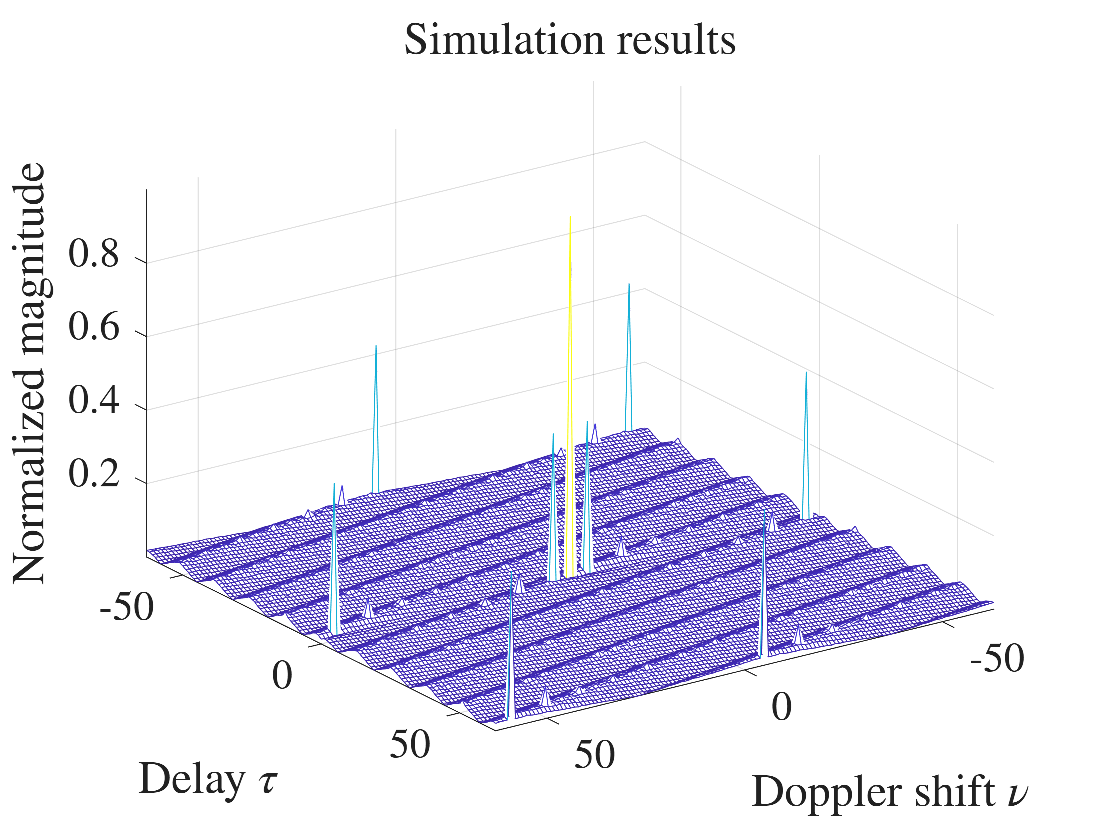}
\hfill
\includegraphics[width=0.48\linewidth]{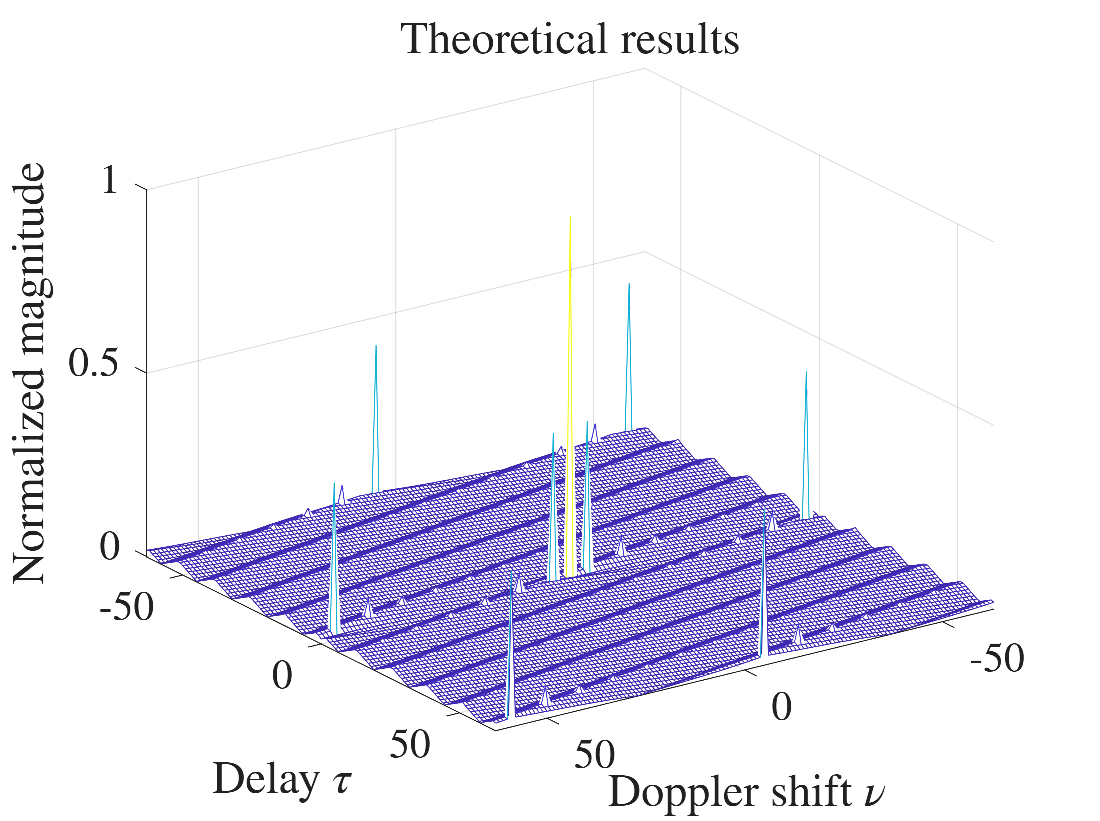}
\caption{Subcarrier configuration $\mathrm{(b)}$.}
\label{fig:AF_b}
\end{subfigure}

\vspace{1em}

\begin{subfigure}{\linewidth}
\centering
\includegraphics[width=0.48\linewidth]{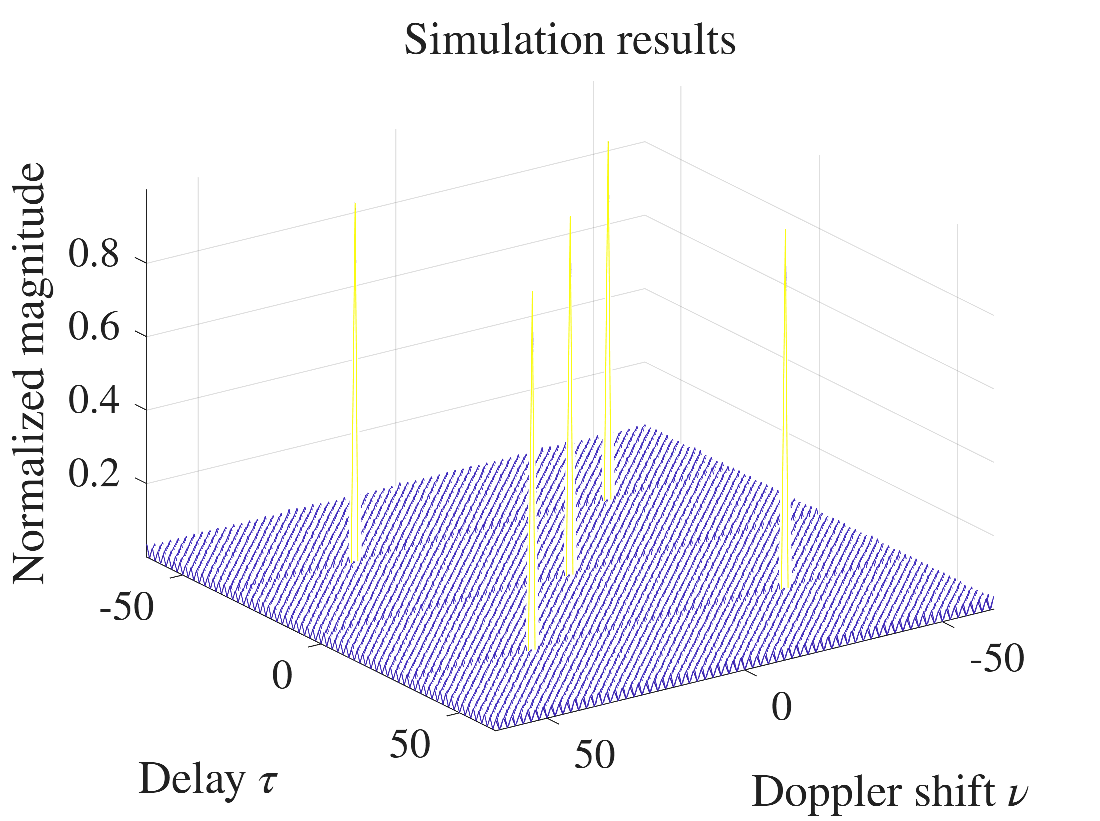}
\hfill
\includegraphics[width=0.48\linewidth]{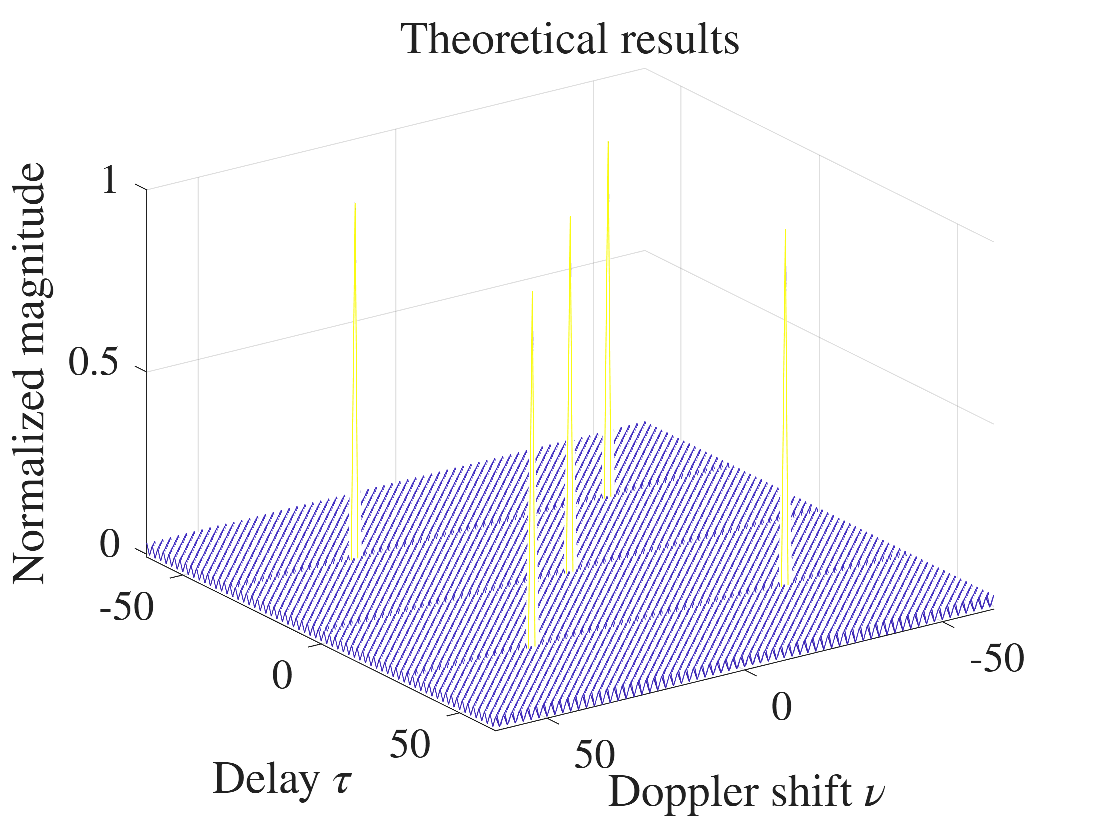}
\caption{Subcarrier configuration $\mathrm{(c)}$.}
\label{fig:AF_c}
\end{subfigure}

\caption{The DPAF for different subcarrier configurations.}
\label{fig:AF_abc}
\vspace{-6mm}
\end{figure}

\begin{figure}
\centering
\begin{subfigure}[t]{0.32\textwidth}
\centering
\includegraphics[width=\linewidth]{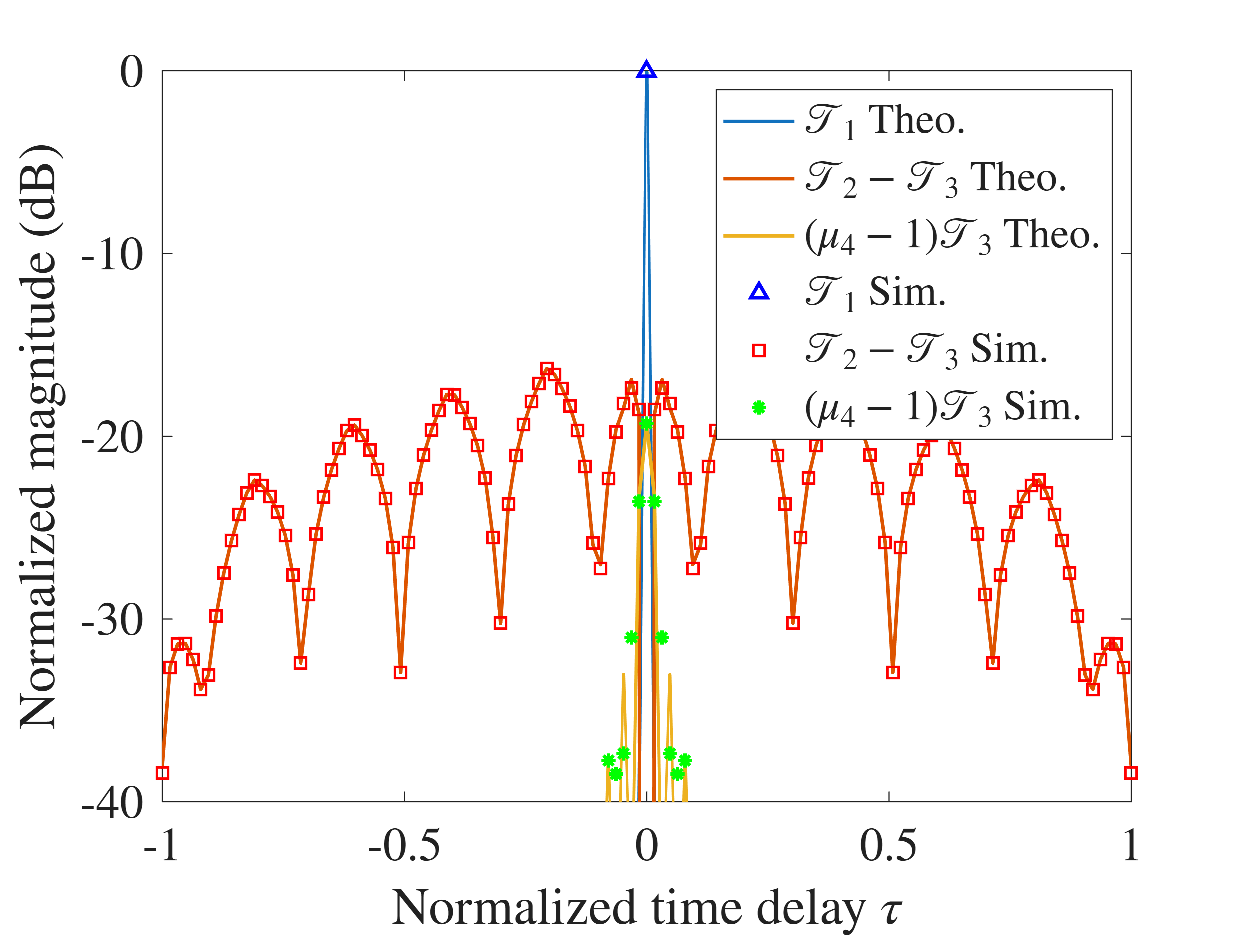}
\caption{Zero-doppler slice.}
\label{fig:af_decomp_range}
\end{subfigure}
\hfill  % 或者使用 \hspace{0.02\textwidth} 控制固定间距
\begin{subfigure}[t]{0.32\textwidth}
\centering
\includegraphics[width=\linewidth]{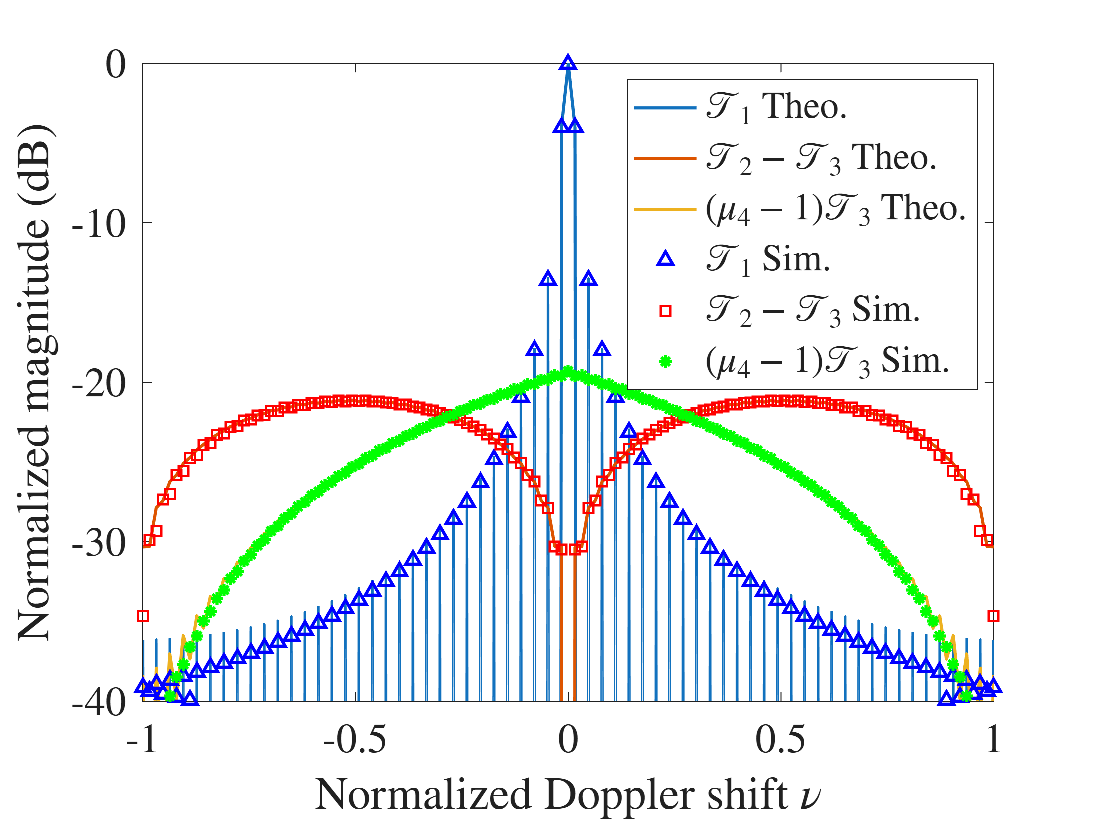}
\caption{Zero-delay slice.}
\label{fig:af_decomp_velocity}
\end{subfigure}

\caption{The DAAF slices for L-FDMA.}
\label{fig:af_decomp}
% \vspace{-3mm}
\end{figure}

% =========================================================================
% VI. NUMERICAL RESULTS
% =========================================================================
\section{Numerical Results}
\label{sec:results}

In this section, we present numerical results to validate our theoretical analyses and demonstrate the performance of the proposed PCS-based ISAC waveform design.

\subsection{Characterization of AF}
\label{subsec:pcs_af_validation}

Fig.~\ref{fig:AF_abc} depicts the DPAF for DAFT-s-AFDM under waveform configurations $\mathrm{(a)}$, $\mathrm{(b)}$, and $\mathrm{(c)}$, respectively. For $\mathrm{(a)}$, we set $N=M=64$; $N=64$ and $M=32$ are invoked for $\mathrm{(b)}$ and $\mathrm{(c)}$. Standard 64-QAM modulation is employed in all cases. It can be observed that the simulated AF align closely with the theoretical predictions, validating the derived expressions. For configuration $\mathrm{(a)}$, Fig.~\ref{fig:AF_a} reveals a pedestal of magnitude $1/N$ in $\mathbb{E} \left\{ |\chi(\tau, \nu)|^2 \right\}$, which is consistent with the observations of OFDM \cite{DPAF_TIT}. In this scenario, $\mathcal{T}_1$ exhibits a single peak within the defined domain, located at the origin. Under configuration $\mathrm{(b)}$, as shown in Fig.~\ref{fig:AF_b}, $\mathbb{E} \left\{ |\chi(\tau, \nu)|^2 \right\}$ features multiple slanted, blade-like ridges with delay-Doppler coupling, resembling the superposition of AFs from linear frequency modulation (LFM) signals with distinct delay shifts. This structure poses two challenges. First, deep valleys between ridges with varying delay-Doppler couplings may lead to substantial degradation for weak targets in multi-target sensing scenarios, as their main peaks could coincide with these valleys. Second, akin to LFM waveforms, DAFT-s-AFDM waveforms encounter difficulties in resolving multiple targets that exhibit delay-Doppler couplings inherent to the waveform. These limitations can be addressed through approaches such as CLEAN algorithm or V-shaped chirp modulation~\cite{CLEAN, FM_OFDM}. Compared to non-chirped modulations like OFDM and discrete Fourier transform-spread OFDM (DFT-s-OFDM), the chirped subcarriers in DAFT-s-AFDM under L-FDMA occupy the full signal bandwidth, rendering the delay resolution largely independent of $M$. This yields superior delay resolution given the subcarrier occupancy $M$, constituting a key advantage. Fig.~\ref{fig:AF_c} illustrates the AF under configuration $\mathrm{(c)}$ of I-FDMA, which manifests as a two-dimensional interleaved comb structure across the delay-Doppler plane, with multiple peaks within the defined domain. Although these peaks constrain the ambiguity-free sensing range, the zero-Doppler slice of the DAFT-s-AFDM  lacks spurious peaks, unlike DFT-s-OFDM. This preserves undistorted ambiguity-free distance at low relative target speeds.

\begin{figure}[t!]
\centering
\begin{subfigure}[b]{0.48\linewidth}
\centering
\includegraphics[width=\linewidth]{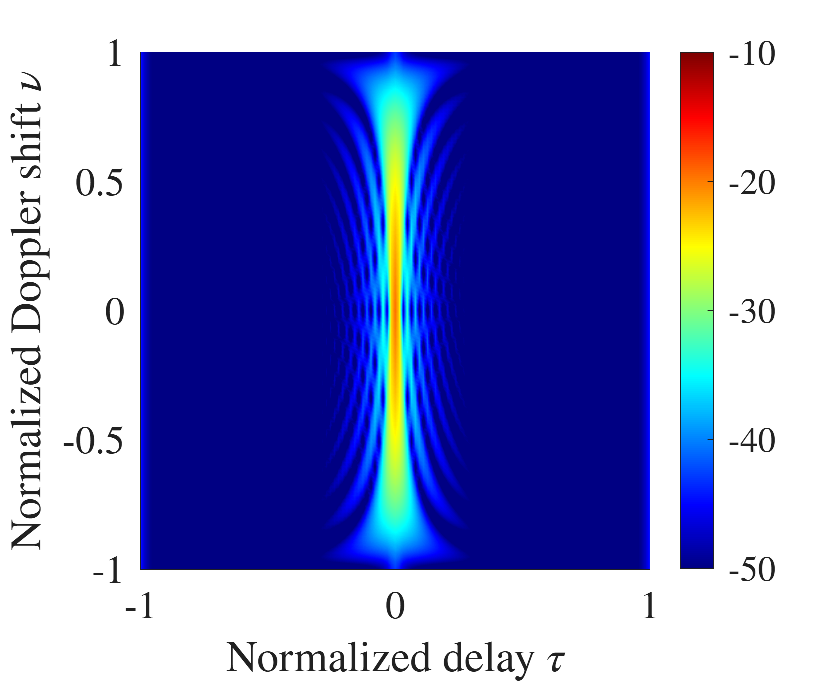}
\caption{$c_1 = 0$, $\Delta \lambda = 0$}
\label{fig:AF_T3_DFT}
\end{subfigure}
\hfill
\begin{subfigure}[b]{0.48\linewidth}
\centering
\includegraphics[width=\linewidth]{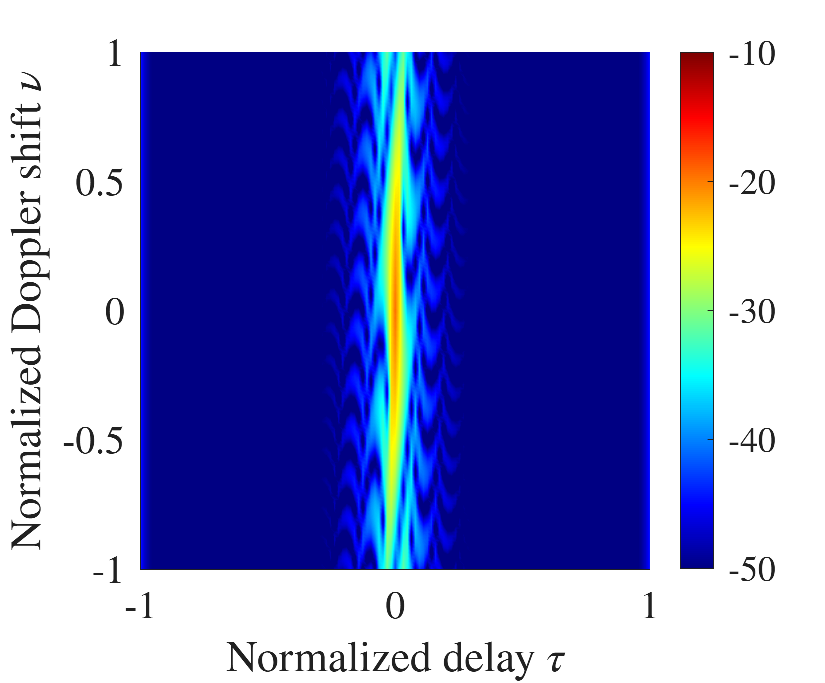}
\caption{$c_1 = 5/2N$, $\Delta \lambda = 0$}
\label{fig:AF_T3_DAFT}
\end{subfigure}

\vspace{1em} 

\begin{subfigure}[b]{0.48\linewidth}
\centering
\includegraphics[width=\linewidth]{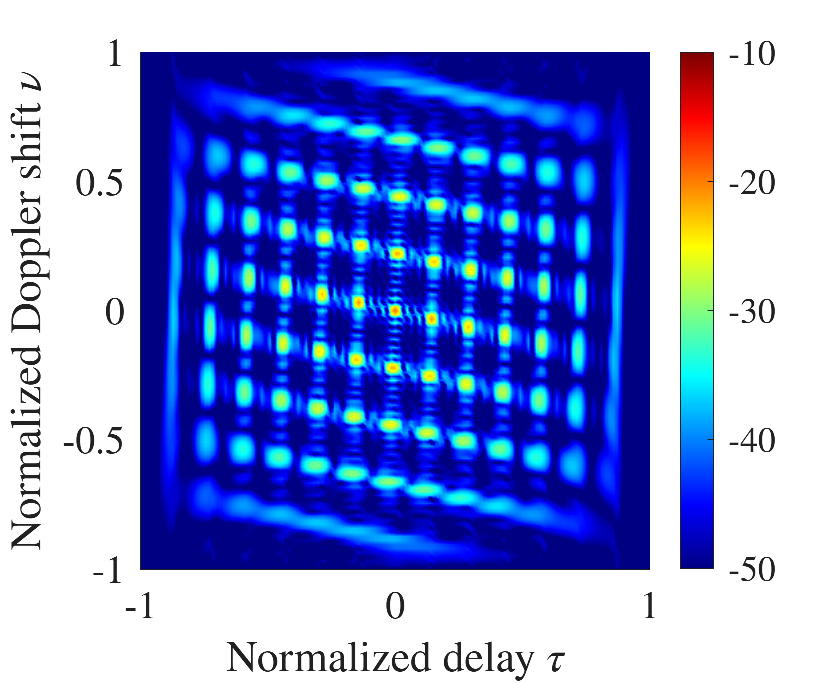}
\caption{$c_1 = 0$, $\Delta \lambda = \frac{\pi}{2}$.}
\label{fig:AF_T3_delta_lambda}
\end{subfigure}
\hfill
\begin{subfigure}[b]{0.48\linewidth}
\centering
\includegraphics[width=\linewidth]{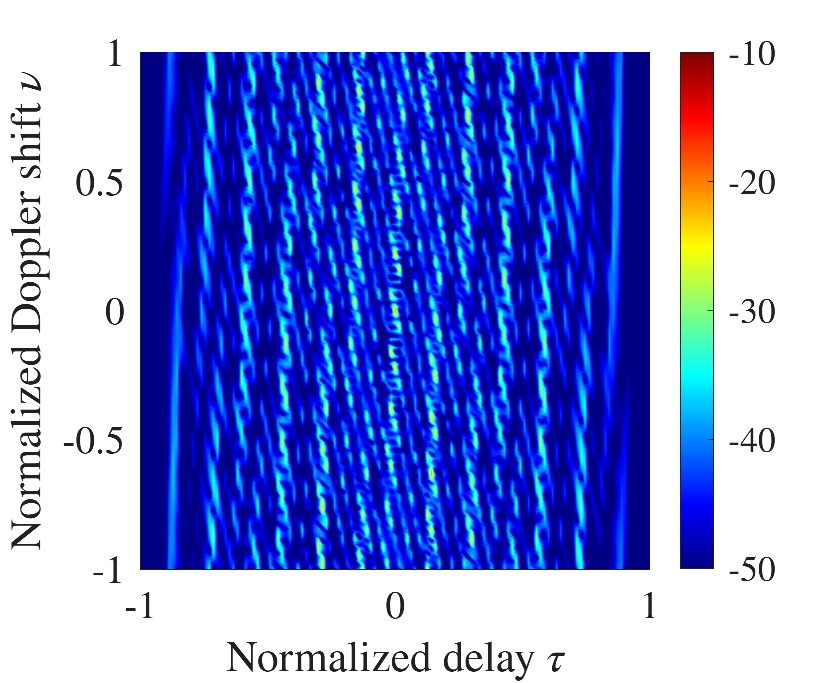}
\caption{$c_1 = 5/2N$, $\Delta \lambda = \frac{\pi}{2}$.}
\label{fig:AF_T3_DFT_delta_lambda}
\end{subfigure}

\caption{$\mathcal{T}_3$ of the AF for different parameter settings.}
\label{fig:AF_T3}
% \vspace{-6mm}
\end{figure}

\begin{figure}[t!]
\centering
\begin{subfigure}[b]{0.7\linewidth}
\centering
\includegraphics[width=\linewidth]{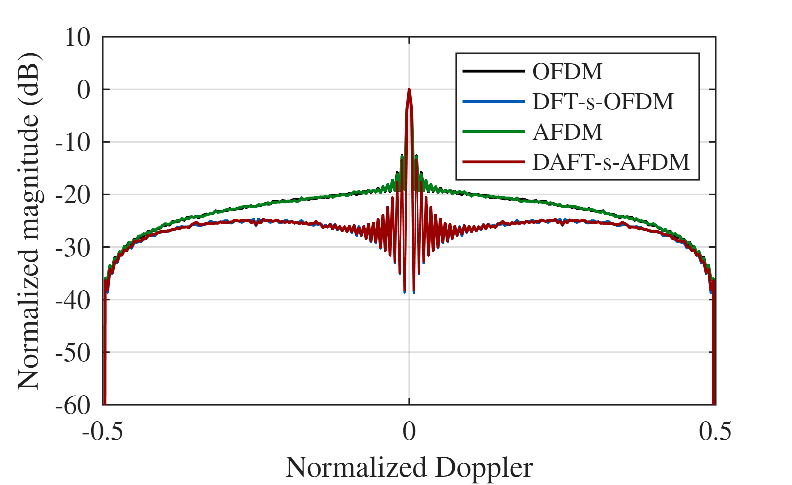}
\caption{L-FDMA ($S=1$).}
\label{fig:zero-Delay_LFDMA}
\end{subfigure}

% 第二行
\begin{subfigure}[b]{0.7\linewidth}
\centering
\includegraphics[width=\linewidth]{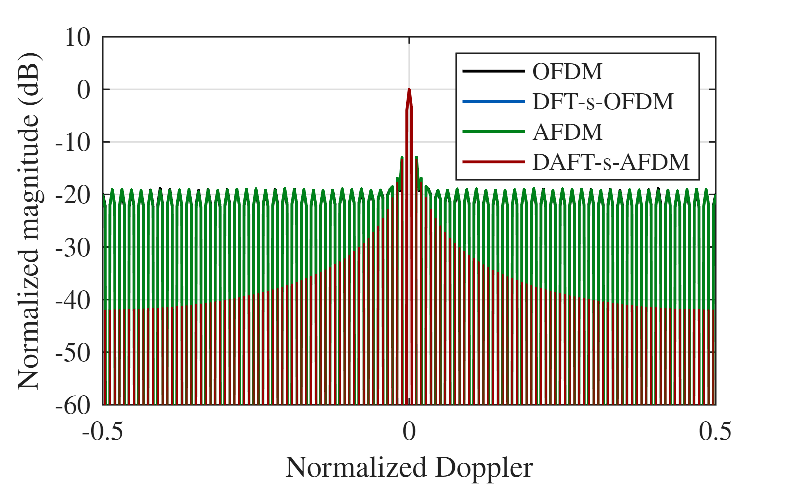}
\caption{I-FDMA ($S=N/M$).}
\label{fig:zero-delay_IFDMA}
\end{subfigure}

\caption{The zero-delay slice.}
\label{fig:AF_waveform_zero_delay}
\vspace{-6mm}
\end{figure}

\begin{figure*}[t!]
\centering
\begin{subfigure}[b]{0.32\textwidth}
\centering
\includegraphics[width=\linewidth]{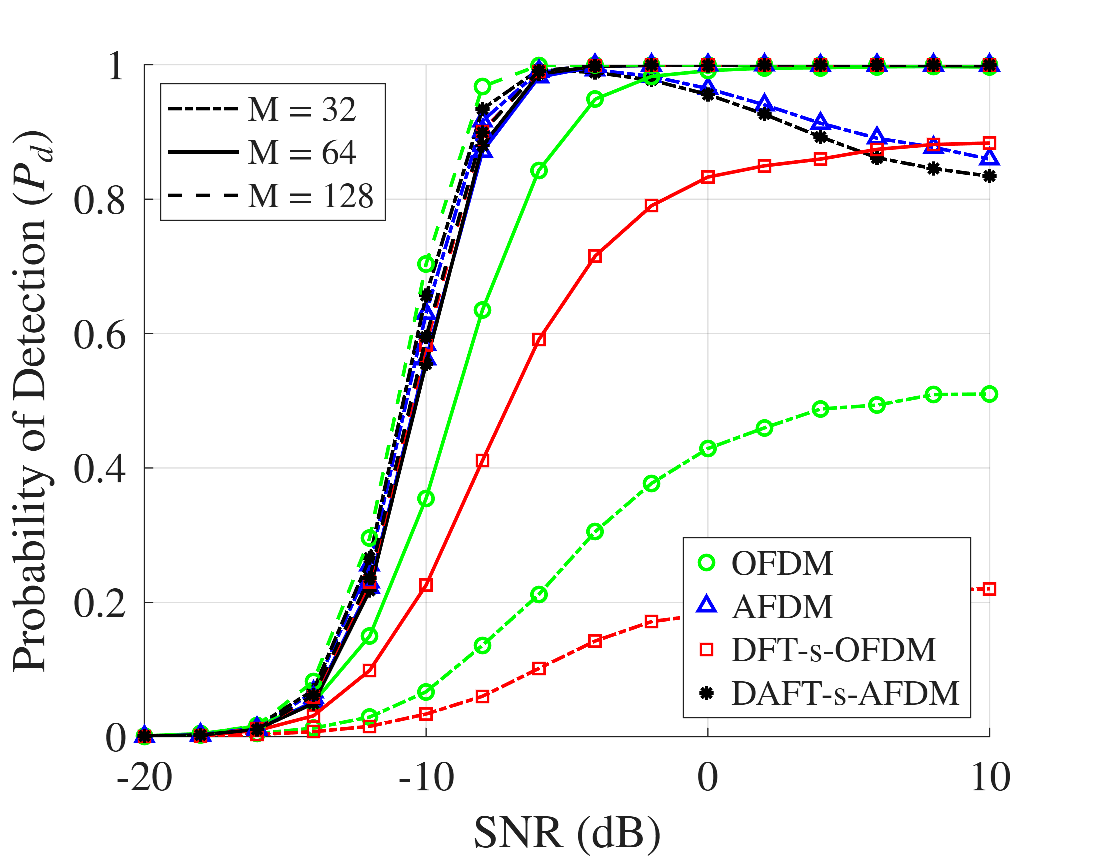}
\caption{Single target at 300 m.}
\label{fig:CFAR_correct_300_all}
\end{subfigure}
\hfill
\begin{subfigure}[b]{0.32\textwidth}
\centering
\includegraphics[width=\linewidth]{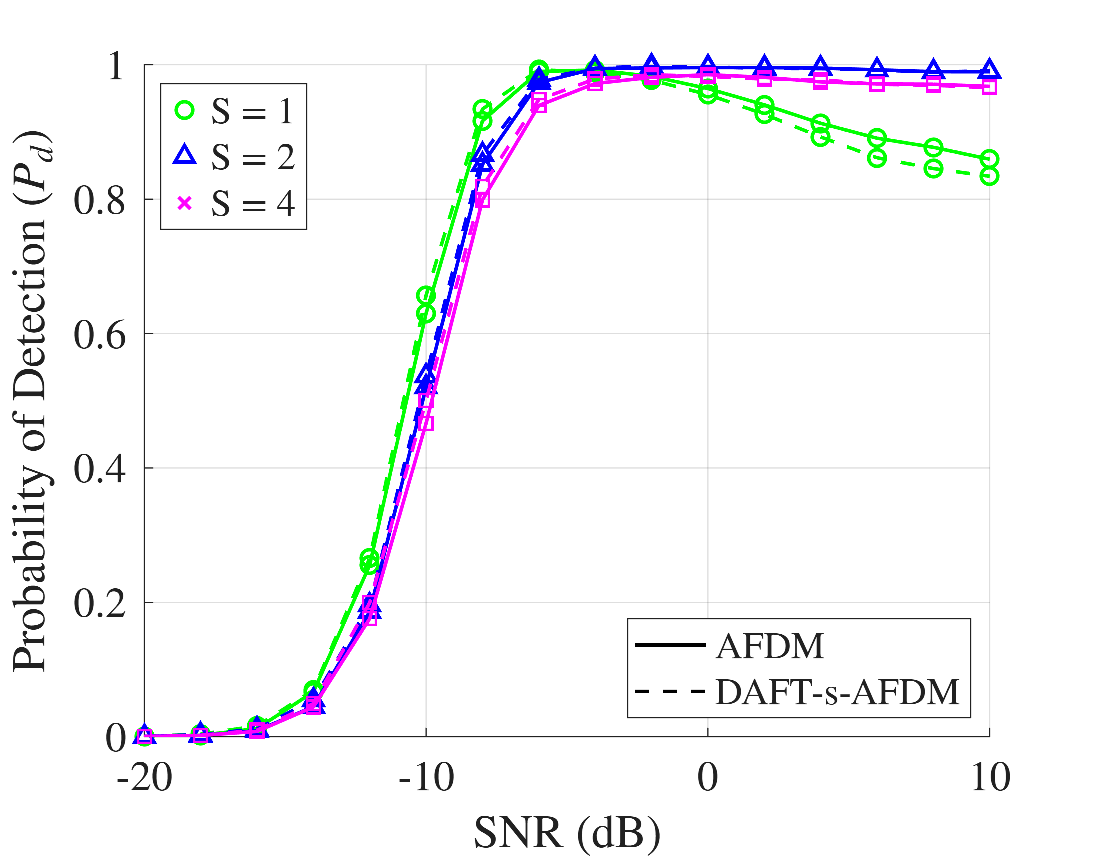}
\caption{Single target at 300 m.}
\label{fig:CFAR_correct_300_all_S}
\end{subfigure}
\hfill
\begin{subfigure}[b]{0.32\textwidth}
\centering
\includegraphics[width=\linewidth]{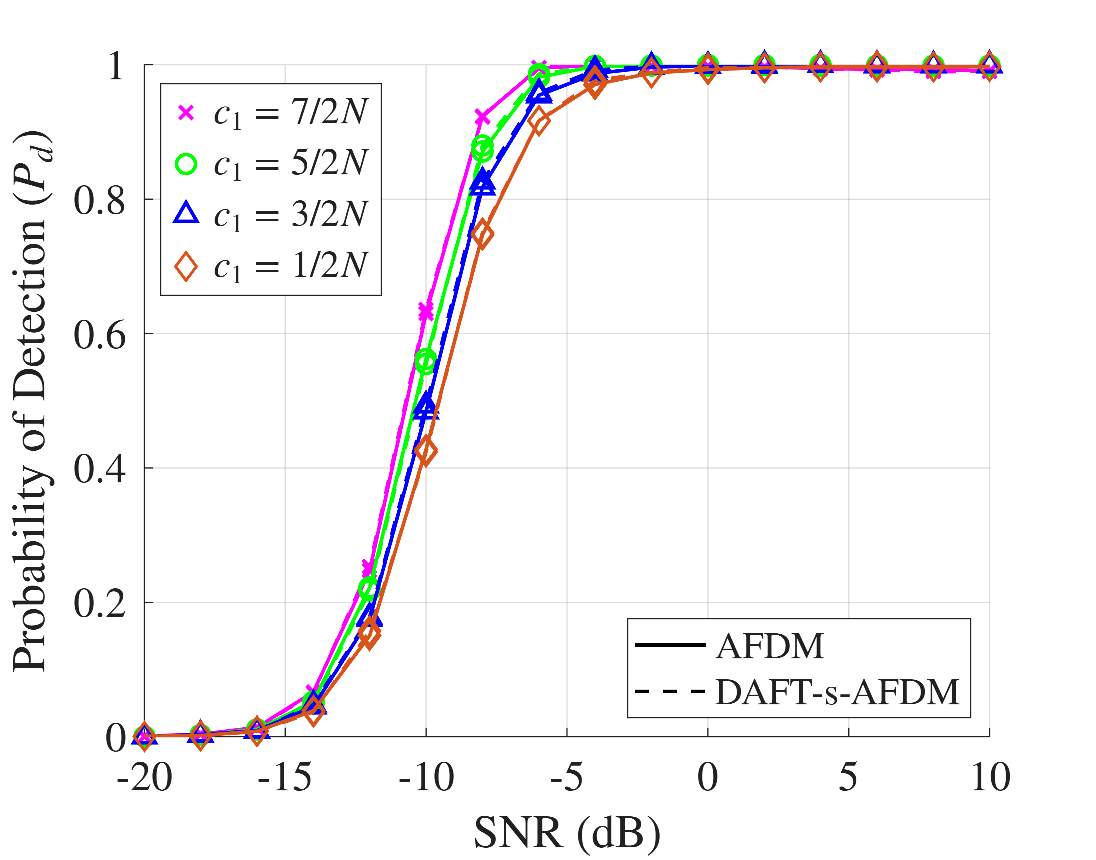}
\caption{Single target at 300 m.}
\label{fig:CFAR_correct_300_all_c1}
\end{subfigure}

\vspace{0.6em}

\begin{subfigure}[b]{0.32\textwidth}
\centering
\includegraphics[width=\linewidth]{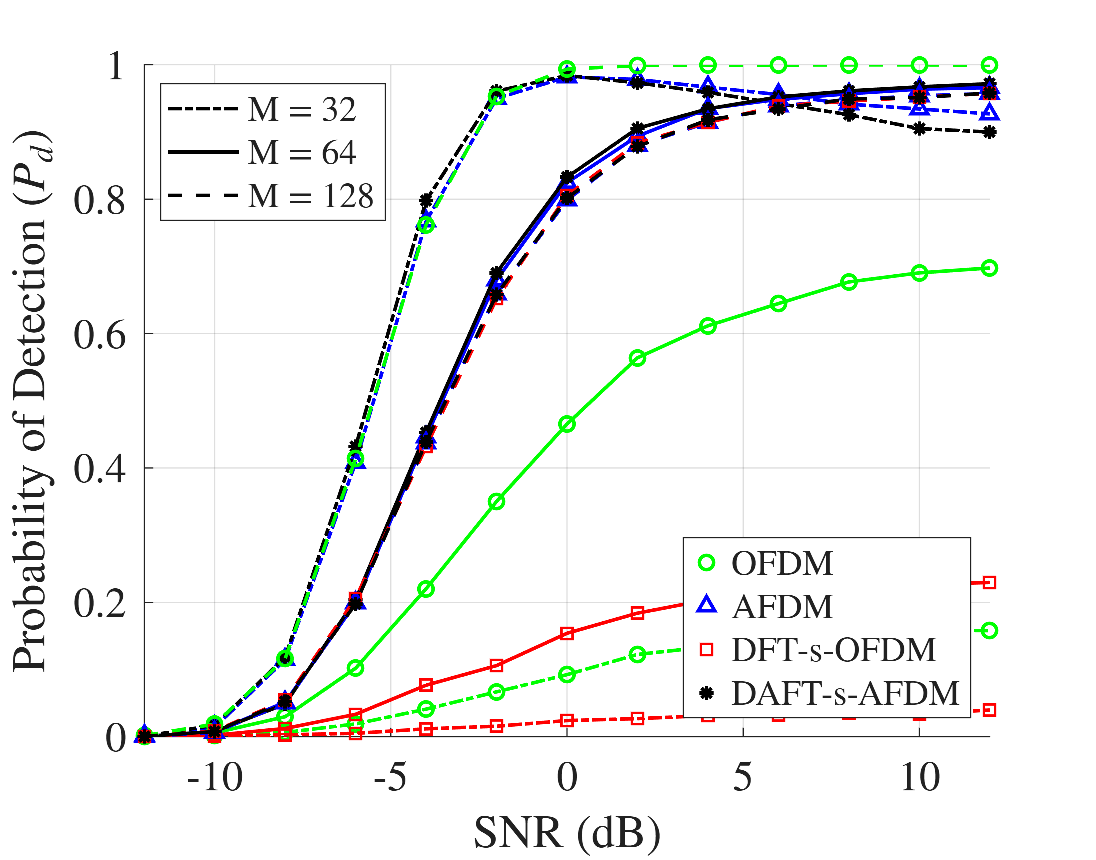}
\caption{Two targets at 300 m and 320 m.}
\label{fig:CFAR_correct_300_320_all}
\end{subfigure}
\hfill
\begin{subfigure}[b]{0.32\textwidth}
\centering
\includegraphics[width=\linewidth]{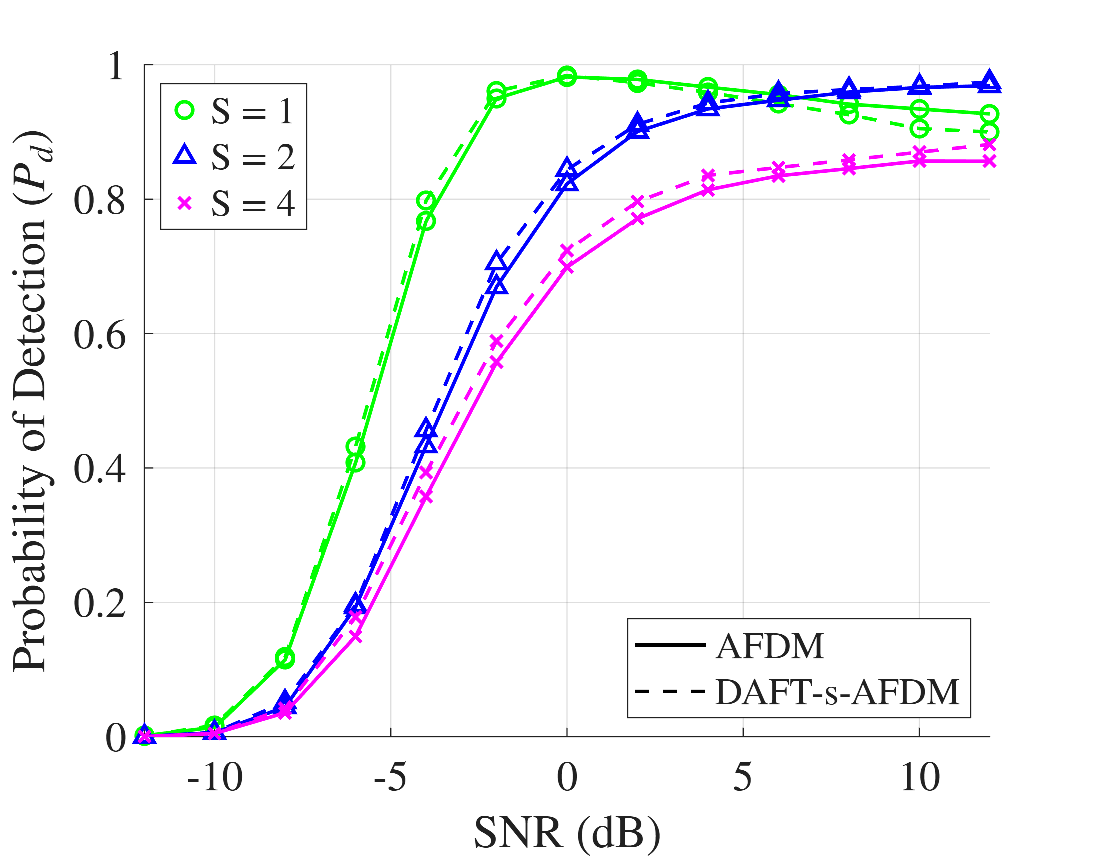}
\caption{Two targets at 300 m and 320 m.}
\label{fig:CFAR_correct_300_320_all_S}
\end{subfigure}
\hfill
\begin{subfigure}[b]{0.32\textwidth}
\centering
\includegraphics[width=\linewidth]{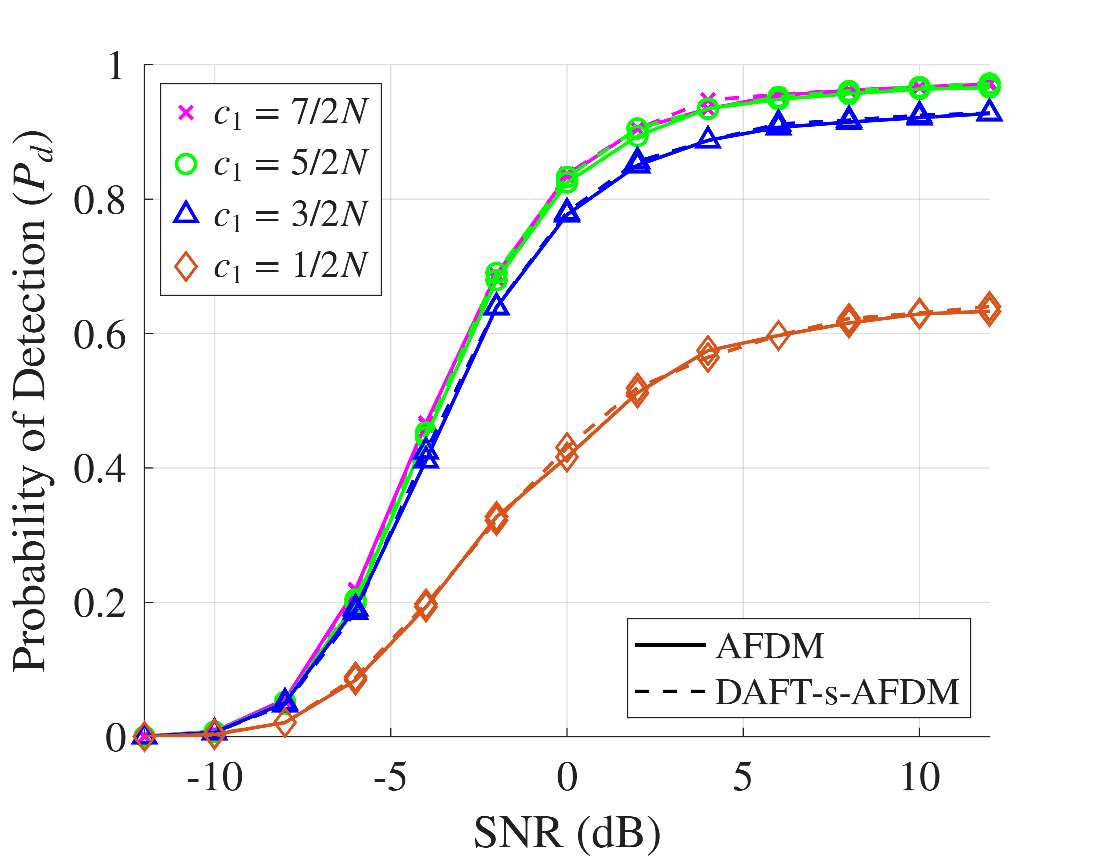}
\caption{Two targets at 300 m and 320 m.}
\label{fig:CFAR_correct_300_320_all_c1}
\end{subfigure}

\caption{CA-CFAR detection results with different $M$, $S$, and $c_1$.}
\label{fig:cfar_correct_result}
% \vspace{-5mm}
\end{figure*}

The DAAF is considered next. Theoretically, the DAAF of DAFT-s-AFDM is equivalent to the DPAF multiplied by a window function that attenuates along the delay axis. Fig.~\ref{fig:af_decomp} illustrates two slices of the DAAF for DAFT-s-AFDM under the waveform configuration $\mathrm{(b)}$, employing parameters identical to those in the simulations for Fig.~\ref{fig:AF_abc}. Figs. \ref{fig:af_decomp_range} and \ref{fig:af_decomp_velocity} reveal that the fourth-order moment of the constellation exerts a relatively small influence on the zero-Doppler cut, whereas its effect on the zero-delay cut is substantial. This finding is consistent with our analytical results in Section \ref{sec:sensing_performance}.

Fig. \ref{fig:AF_T3} illustrates the 2D AF of the $\mathcal{T}_2$ term in \eqref{eq:DAFT_AF}, where $c_1=0$ corresponds to the special case of DFT-s-OFDM. It can be observed that when $\Delta \lambda =0$, the ambiguity region influenced by the normalized fourth-order moment of the constellation is predominantly concentrated around the zero-delay slice. This reflects the inherent nature of quasi-single-carrier waveforms, such as DAFT-s-AFDM. When the conditions of Proposition \ref{prop:1} are not satisfied, the influence of the $\mathcal{T}_3$ is distributed across the entire delay-Doppler plane, resulting in multiple local peaks, as v in Fig. \ref{fig:AF_T3_delta_lambda} and \ref{fig:AF_T3_DFT_delta_lambda}. This characteristic can be exploited to improve the SNR of responses from localized targets, consistent with the behavior observed in AFDM \cite{AFDM_AF_FANLIU}. Notably, DFT-s-OFDM with $ c_1 = 0 $ exhibits the same property.

We then proceed to compute and plot the zero-delay slice for four distinct waveforms: OFDM, AFDM, DFT-s-OFDM, and DAFT-s-AFDM, all employing QPSK modulation with identical subcarrier mapping. As shown in Fig. \ref{fig:AF_waveform_zero_delay}, both DAFT-s-AFDM and DFT-s-OFDM exhibit similar characteristics in the zero-delay slice and outperform the multicarrier waveforms OFDM and AFDM because of their lower sidelobes. Furthermore, as illustrated in Figs. \ref{fig:zero-Delay_LFDMA} and \ref{fig:zero-delay_IFDMA}, with constant-modulus constellations, the ambiguity sidelobes in L-FDMA is primarily concentrated near the origin, whereas in I-FDMA it is mainly distributed at locations far from the origin. This contrasting behavior provides useful guidance for waveform selection in practical applications.

\subsection{Analysis of Parameter Impact on Detection Probability}
\label{subsec:sensing_eval}
In this subsection, we evaluate the sensing performance of the proposed schemes, focusing on the impact of key waveform parameters: the number of active subcarriers ($M$), the subcarrier mapping interval ($S$), and the chirp parameter ($c_1$). We set a carrier frequency of 77 GHz, a sampling rate of 100 MHz, and a total of $N=128$ subcarriers. The standard 64QAM constellation is employed. Target detection is executed using a matched filter combined with the cell averaging constant false alarm rate (CA-CFAR) algorithm \cite{CA_CFAR}. All presented results denote statistical averages derived from 10,000 Monte Carlo trials.

We also benchmark those four communication waveforms. To ensure a fair comparison, the OFDM and AFDM waveforms employ a subcarrier mapping scheme analogous to that of DAFT-s-AFDM, despite deviations from standard implementations. Fig. \ref{fig:cfar_correct_result} depicts the detection probability of these waveforms under various $M$ configurations, utilizing L-FDMA mapping with $c_1 = 5/2N$. Specifically, Figs. \ref{fig:CFAR_correct_300_all} and \ref{fig:CFAR_correct_300_320_all} illustrate the single-target and dual-target scenarios, respectively. In the dual-target case, the response power of the secondary target is set to half that of the primary target.

As shown in \ref{fig:CFAR_correct_300_all} and \ref{fig:CFAR_correct_300_320_all}, the chirp-based waveforms significantly outperform the non-chirp variants, while the performance disparity between AFDM and DAFT-s-AFDM is negligible. A comparison between the $M=64$ and $M=32$ configurations reveals distinct behaviors. In the low SNR regime, the $M=32$ configuration yields a higher detection probability than $M=64$. This is primarily attributed to the narrower main lobe and more concentrated energy of the AF at $M=32$. However, as SNR increases, the ambiguity features described in Proposition \ref{prop:3} amplify the impact of sidelobe peaks on target detection. This results in an increased false alarm rate, thereby degrading the probability of detection.

Fig. \ref{fig:CFAR_correct_300_all_S} and \ref{fig:CFAR_correct_300_320_all_S} illustrates the impact of the subcarrier mapping interval $S$ on detection probability, with $M=32$. While the $S=1$ configuration initially exhibits superior performance due to favorable main lobe characteristics, significant false alarms in high SNRs severely compromise detection accuracy. Conversely, the $S=2$ and $S=4$ configurations are more robustness, with $S=2$ outperforming $S=4$.

Finally, Fig. \ref{fig:CFAR_correct_300_all_c1} and \ref{fig:CFAR_correct_300_320_all_c1} examines the effect of the chirp parameter $c_1$ on sensing performance, given $M=64$ and $S=1$. A generally positive correlation is observed between the detection probability and $c_1$. Furthermore, the performance gain associated with a larger $c_1$ is more pronounced in dual-target scenarios. This phenomenon arises because $c_1$ governs the number of periodic sidelobes within the maximum delay range. A higher value of $c_1$ produces more sidelobes, effectively narrowing the widths of both the main lobe and sidelobes. This narrowing effect consequently enhances detection sensitivity at the target response.

\begin{figure}[t!]
\centering
% It's good practice to have a block diagram for the model
\includegraphics[width=0.85\linewidth]{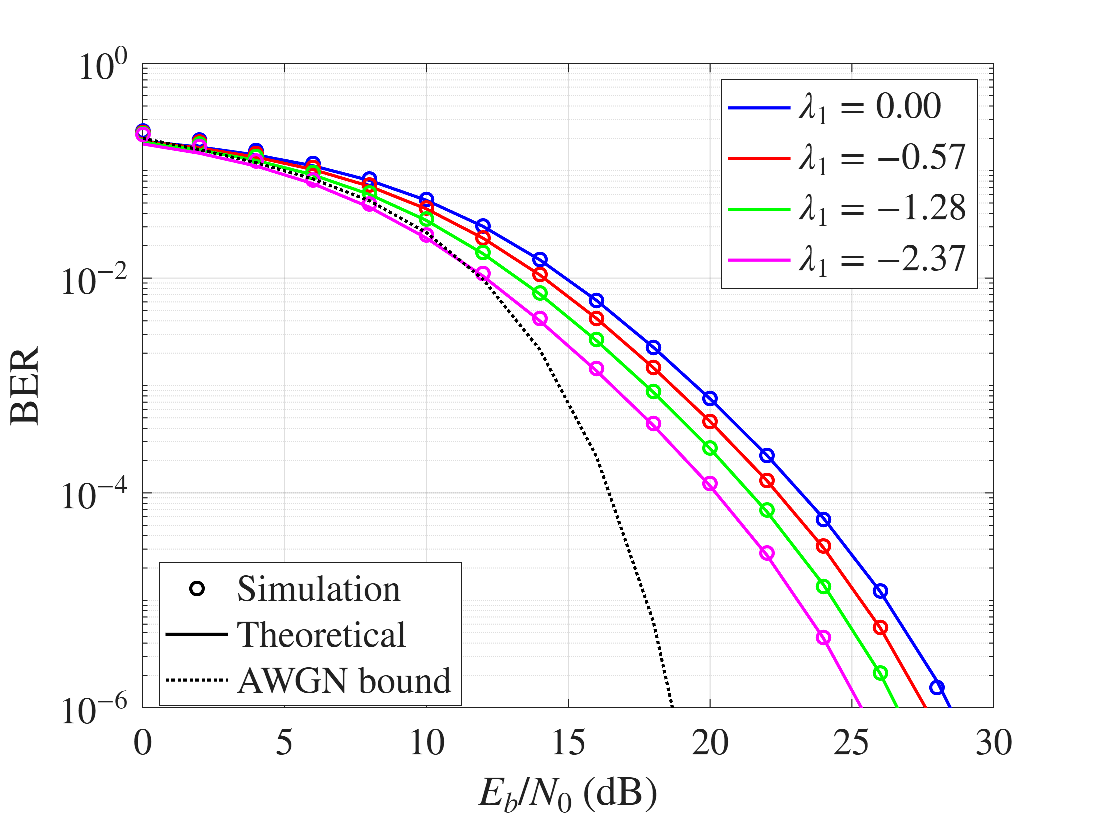}
\caption{DAFT-s-AFDM BER Comarison with PCS.}
\label{fig:BER_SIM_PCS}
% \vspace{-4mm}
\end{figure}

\begin{figure}[t!]
\centering

\includegraphics[width=0.85\linewidth]{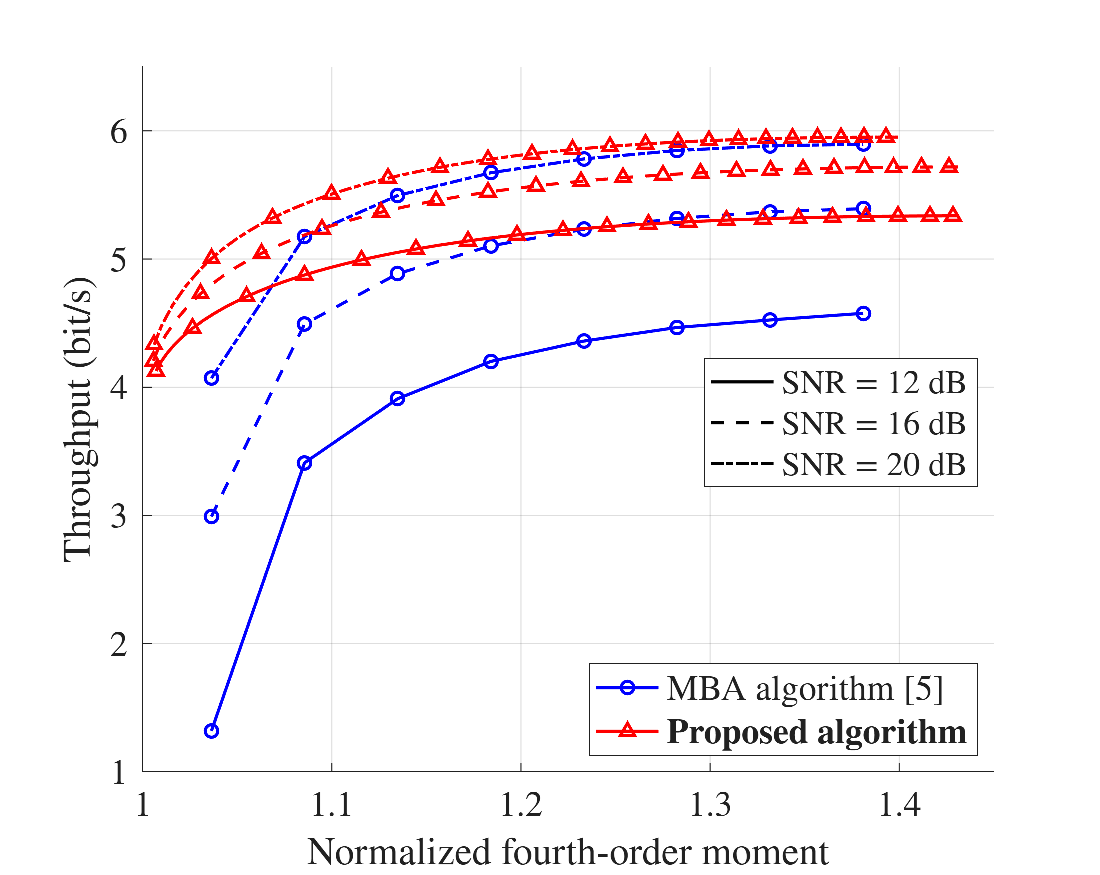}
\caption{Pareto fronts at different SNRs.}
\label{fig:Pareto}
% \vspace{-4mm}
\end{figure}

\begin{figure}[t!]
\centering
\captionsetup[subfigure]{justification=centering}

\begin{subfigure}[b]{0.32\linewidth}
\includegraphics[width=\linewidth]{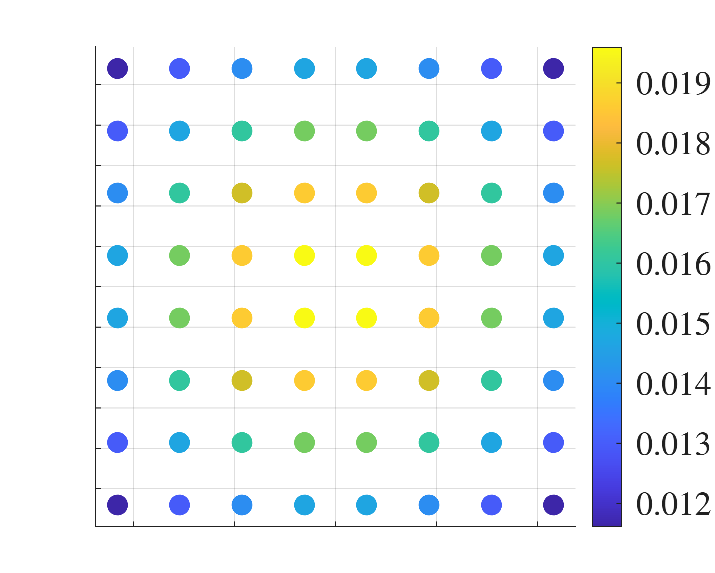}
\caption{Comm.-centric\\ $\omega_1 = 1$}
\label{fig:const_comm}
\end{subfigure}
\hfill
\begin{subfigure}[b]{0.32\linewidth}
\includegraphics[width=\linewidth]{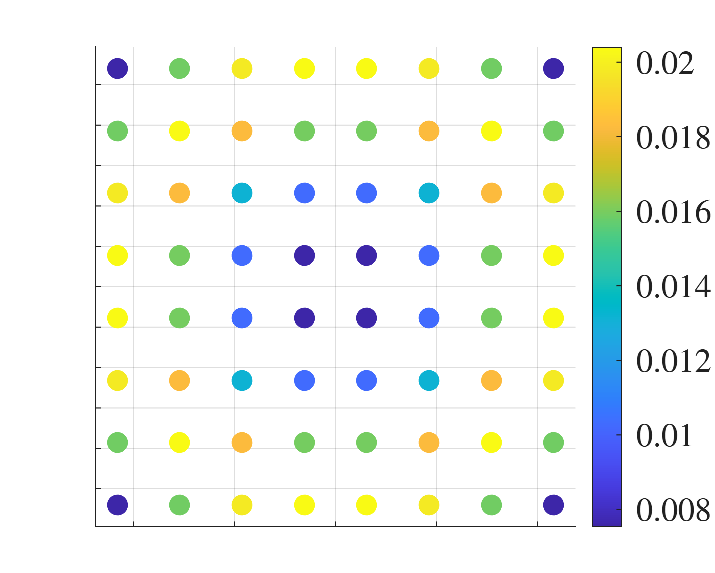}
\caption{Balanced\\ $\omega_1 = 0.5$}
\label{fig:const_balanced}
\end{subfigure}
\hfill
\begin{subfigure}[b]{0.32\linewidth}
\includegraphics[width=\linewidth]{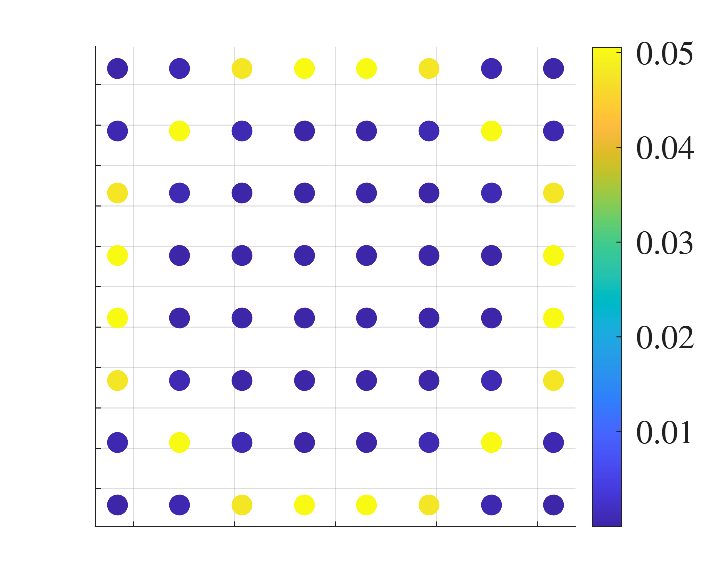}
\caption{Sens.-centric\\ $\omega_1 = 0$}
\label{fig:const_sensing}
\end{subfigure}

\caption{Optimal PCS constellations at three representative operating points on the Pareto front in Fig.~\ref{fig:Pareto}.}
\label{fig:constellations}
% \vspace{-3mm}
\end{figure}

% \subsection{Evaluation of Proposed Algorithms}
\subsection{Evaluation of Proposed Algorithm}

In this subsection, we optimize the constellation probabilities using the proposed algorithm. For the DAFT-s-AFDM, the parameters are configured as $N=128$, $M=64$, $S=1$, and $c_1=5/(2N)$. The simulation employs a three-path communication channel model with a maximum normalized Doppler shift of $\alpha=1$, and the total number of simulated bits is $10^7$. Fig.~\ref{fig:BER_SIM_PCS} presents the BER curves under the generalized MB distribution for $|\lambda_1|=0$, $0.57$, $1.28$ and $2.37$. The theoretical BER closely matches the simulation results. Moreover, as $|\lambda_1|$ increases, the overall constellation becomes more Gaussian-like, leading to a lower BER.

Building on this BER validation, the overall throughput-sensing tradeoff on the Pareto front is next examined. As observed in Fig.~\ref{fig:Pareto}, the proposed algorithm exhibits a clear performance advantage over the MBA algorithm. A small search grid with $|\mathcal{W}_{\text{grid}}| \times |\mathcal{P}_{\text{grid}}| = 4\times4$ is used. Table \ref{tab:runtime_comparison} presents the runtime comparison between the proposed algorithm and the MBA algorithm under different algorithmic parameter settings. For the MBA algorithm, the number of Monte Carlo samples is chosen as 5000, consistent with \cite{Reshaping}, and 3000 as a lower-complexity setting. It can be observed that the proposed algorithm achieves a runtime approximately one order of magnitude lower than that of the MBA algorithm. Fig.~\ref{fig:constellations} shows three constellations selected from the Pareto front at $\text{SNR}=12\text{ dB}$ by solving the PCS optimization with different trade-off weights $\omega=1$, $0.5$ and $0$. The communication-centric point lies near the maximum-throughput end, the sensing-centric point lies near the minimum normalized fourth-order moment end, and the balanced point is taken from the middle, providing a controllable tradeoff between throughput and sensing performance.

Finally, the velocity estimation performance of the three constellation schemes depicted in Fig. \ref{fig:constellations} is evaluated. In this process, the received signal matrix is derived using the cyclic cross-correlation (CCC) algorithm, followed by the application of the multiple signal classification (MUSIC) algorithm to achieve super-resolution velocity estimation \cite{Signal_process, CCC}. The simulation scenario consists of a single target with range and velocity parameters set to $[500\text{m}, 20\text{m/s}]$. Fig. \ref{fig:MUSIC_LFDMA} presents the results averaged over 1000 Monte Carlo trials. As observed in Fig.~\ref{fig:MUSIC_sub1}, the sensing-centric constellation achieves an improvement of approximately 3.1 dB in the sidelobe level compared to the communication-centric scheme. Moreover, Fig.~\ref{fig:MUSIC_sub2} plots the average sidelobe level of MUSIC-based velocity estimation versus SNR for different waveforms. It can be observed that PCS provides consistent gains across all SNR values, and DAFT-s-AFDM exhibits both the highest PCS gain and the lowest sidelobe level, since DAFT spreading lowers PAPR and AFDM is more resilient to Doppler over OFDM, allowing the same PCS shaping to deliver more significant sidelobe suppression and a lower estimation error floor. DFT-s-OFDM, as a single-carrier-like waveform, also exhibits strong velocity-estimation capability, but its non-chirp structure causes a moderate performance loss compared with DAFT-s-AFDM. Owing to the favorable time-domain autocorrelation of OFDM, its velocity-estimation performance is slightly better than that of AFDM among multicarrier waveforms, while both remain inferior to the two single-carrier-like schemes. These observations further validate our theoretical analysis and the performance superiority of the proposed algorithm.

\begin{figure}[t!]
\centering
\begin{subfigure}[t]{0.72\linewidth}
\centering
\includegraphics[width=\linewidth]{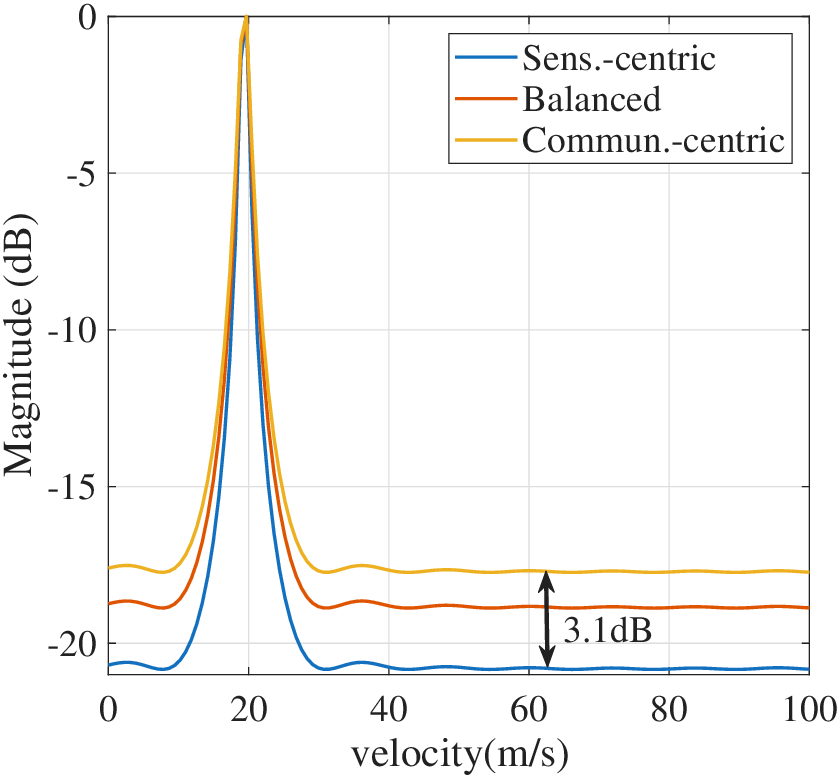}
\caption{}
% \caption{MUSIC Spectrum at $\mathrm{SNR}=20$dB under three PCS strategies.}
\label{fig:MUSIC_sub1}
\end{subfigure}
\hfill
\begin{subfigure}[t]{0.9141\linewidth}
\centering
\includegraphics[width=\linewidth]{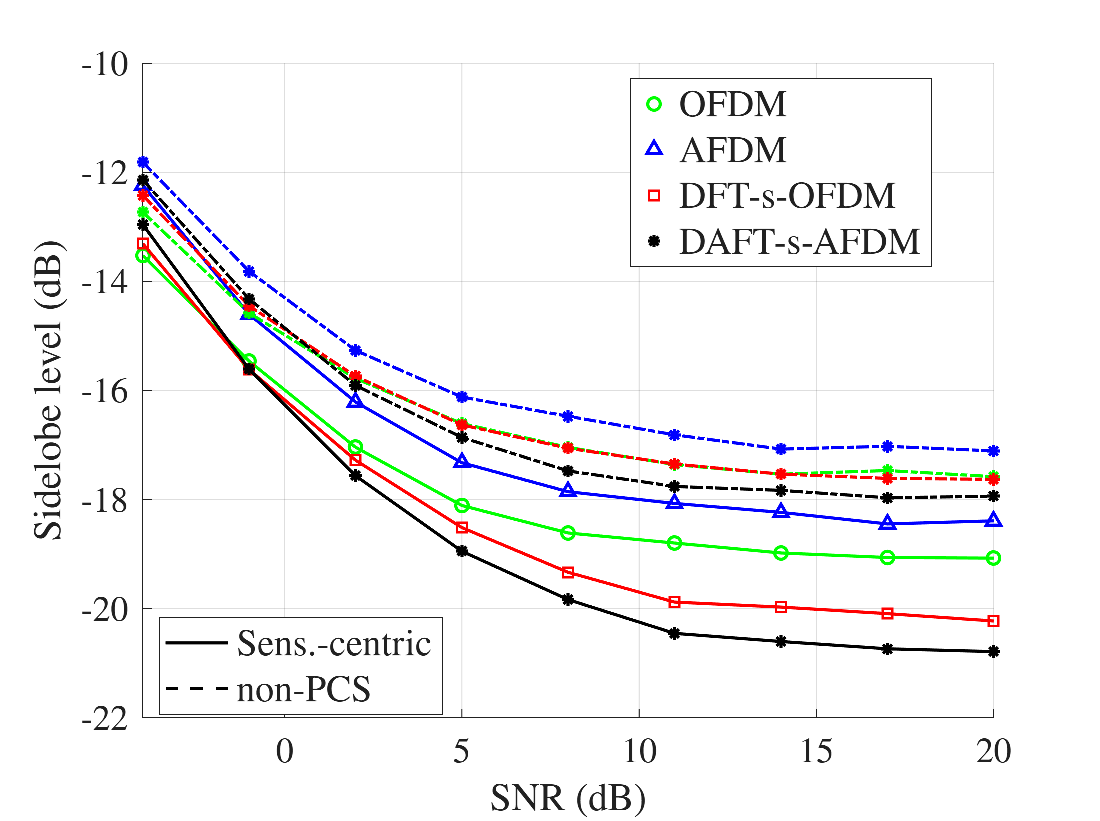}
% \caption{MUSIC spectrum sidelobe level.}
\caption{}
\label{fig:MUSIC_sub2}
\end{subfigure}
\caption{Illustration of (a) MUSIC Spectrum at SNR = 20dB under three PCS strategies, and (b) MUSIC spectrum sidelobe levels using sensing centric PCS and non-PCS strategies.}
\label{fig:MUSIC_LFDMA}
% \vspace{-6mm}
\end{figure}

\begin{table}[t]
\centering
\caption{Runtime comparison.}
\label{tab:runtime_comparison}
\begin{tabular}{lcc}
\toprule
Method & Parameter setting & Runtime (s) \\
\midrule
MBA~\cite{Reshaping}   & MC samples: 3000                & 1.66e-2 \\
MBA~\cite{Reshaping}   & MC samples: 5000                & 2.14e-2 \\
Proposed & First-stage grid: $4 \times 4$ & 2.27e-3 \\
Proposed & First-stage grid: $8 \times 8$ & 2.29e-3 \\
\bottomrule
\end{tabular}
% \vspace{-3mm}
\end{table}

% Finally, the velocity estimation performance of the three constellation schemes depicted in Fig. \ref{fig:constellations} is evaluated. In this process, the received signal matrix is derived using the cyclic cross-correlation (CCC) algorithm, followed by the application of the multiple signal classification (MUSIC) algorithm to achieve super-resolution velocity estimation \cite{Signal_process, CCC}. The simulation scenario consists of two targets with range and velocity parameters set to $[50\text{m}, 50\text{m/s}]$ and $[70\text{m}, 20\text{m/s}]$, respectively. Fig. \ref{fig:MUSIC_LFDMA} presents the results averaged over 1000 Monte Carlo trials. It is observed that the sensing-centric constellation achieves an improvement of approximately 3.69 dB in the noise floor compared to the communication-centric scheme. This result effectively validates our theoretical analysis and the performance superiority of the proposed algorithm.

% =========================================================================
% VII. CONCLUSION
% =========================================================================
\section{Conclusion}
\label{sec:conclusion}

In this paper, a comprehensive AF analysis of the DAFT-s-AFDM waveform has been presented. A simplified analytical AF expression was derived, and the corresponding properties under both L-FDMA and I-FDMA mapping schemes were characterized, showing that DAFT-s-AFDM achieves a better delay resolution than conventional DFT-s-OFDM. Based on these insights, PCS was introduced to address the communication-sensing trade-off in ISAC systems. Furthermore, a low-complexity optimization algorithm combining grid search and the simplex method was developed to design waveforms with balanced communication and sensing performance. Simulation results verified the theoretical analysis and demonstrated that the proposed PCS-based designs provide significant performance improvements, offering a flexible and promising solution for next-generation ISAC waveform design.

% =========================================================================
% INTRODUCTION (append Notations at the end)
% =========================================================================
% NOTE: The Introduction section is earlier in the document. We insert the
% Notations subsection right before the next \section after Introduction.

% =========================================================================
% REFERENCES
% =========================================================================
\bibliographystyle{IEEEtran}
\bibliography{references} % This will look for a file named "references.bib"

% =========================================================================
% DOCUMENT END
% =========================================================================
\end{document}